\newtheorem{lemma}{Lemma}
\newtheorem{theorem}{Theorem}
\newtheorem{corollary}{Corollary}
\newtheorem{invariant}{Invariant}
\newcommand{\qed}{\hspace*{\fill}\rule{6pt}{6pt}}
\newenvironment{proof}{\noindent{\bf Proof:}}{\bigskip} 
\def\@begintheorem#1#2{\sl \trivlist \item[\hskip \labelsep{\bf #1\ #2:}]}
\def\@opargbegintheorem#1#2#3{\sl \trivlist
      \item[\hskip \labelsep{\bf #1\ #2\ #3:}]}
\newcommand{\highlight}[1]{{\itshape #1}}
\begin{document}

	\title{Drawing Trees with Perfect Angular Resolution\\ and
	  Polynomial Area\footnote{A preliminary version of this paper appeared at the 18th International Symposium on Graph Drawing (GD'10)~\cite{degkn-dtwpa-10}.}}

	\author{Christian A. Duncan\thanks{Department of Mathematics and Computer Science, Quinnipiac University, Hamden, CT, USA} \and
	  David Eppstein\thanks{Department of Computer Science, University of California, Irvine, California, USA} \and
	  Michael T. Goodrich$^\ddagger$ \and
	  Stephen G. Kobourov\thanks{Department of Computer Science, University of Arizona, Tucson, Arizona, USA} \and
	  Martin N\"ollenburg\thanks{Institute of Theoretical Informatics, Karlsruhe Institute of Technology, Germany}}

\date{}
\maketitle

\begin{abstract}
	We study methods for drawing
	trees with perfect angular resolution, i.e.,
	with angles at each node $v$ equal to $2\pi/d(v)$.
	We show:
	\begin{enumerate}
	\item
	Any unordered tree has a crossing-free straight-line drawing
	with perfect angular resolution and polynomial area.
	\item
	There are ordered trees that require exponential area for any
	crossing-free straight-line drawing having perfect angular
	resolution.
	\item
	Any ordered tree has a crossing-free \highlight{Lombardi-style} drawing 
	(where each edge is represented by a circular arc)
	with perfect angular resolution and polynomial area.
	\end{enumerate}
	Thus, our results explore what is achievable with straight-line drawings and 
	what more is achievable with
	Lombardi-style drawings, with respect to drawings of trees with
	perfect angular resolution.
\end{abstract}

\newpage

\section{Introduction}
\label{sec:intro}

Most methods for visualizing trees aim to produce drawings
that meet as many of the following aesthetic constraints as possible:
\begin{list}{\labelenumi}{
\usecounter{enumi}
\setlength{\leftmargin}{5em}
\setlength{\labelwidth}{3em}
\setlength{\itemindent}{0em}}
\item straight-line edges,
\item crossing-free edges,
\item polynomial area, and
\item perfect angular resolution around each node.
\end{list}
These constraints are all well-motivated, in that we 
desire edges that are easy to follow, do not confuse viewers with
edge crossings,
are drawable using limited real estate, and avoid congested 
incidences at nodes.
Nevertheless, previous tree drawing algorithms have made various compromises
with respect to this set of constraints; we are not aware of any previous tree-drawing algorithm
that can achieve all these goals simultaneously.
Our goal in this paper is to show what is actually possible with respect to
this set of constraints and to expand it 
further with a richer notion of edges that are easy to follow.
In particular, we desire tree-drawing algorithms that satisfy
all of these constraints simultaneously.
If this is provably not possible,
we desire an augmentation that avoids compromise and instead 
meets the spirit of all of these goals in a new way, which, in the
case of this paper, is
inspired by the work of artist Mark Lombardi~\cite{hl-mlgn-03}.

\paragraph{Problem Statement.}
The art of Mark Lombardi involves drawings of social networks,
typically using circular arcs and good angular resolution.
Figure~\ref{fig:lombardi} shows such a work of Lombardi that is
crossing-free and almost a tree.
It makes use of both circular arcs and straight-line
edges.
Inspired by this work, let us define a set of problems that explore
what is achievable for drawings of trees with respect to the
constraints listed above but that, like Lombardi's drawings, also allow curved as well as straight-line edges.

\begin{figure}[hbt!]
\centering
\includegraphics[width=3.9in]{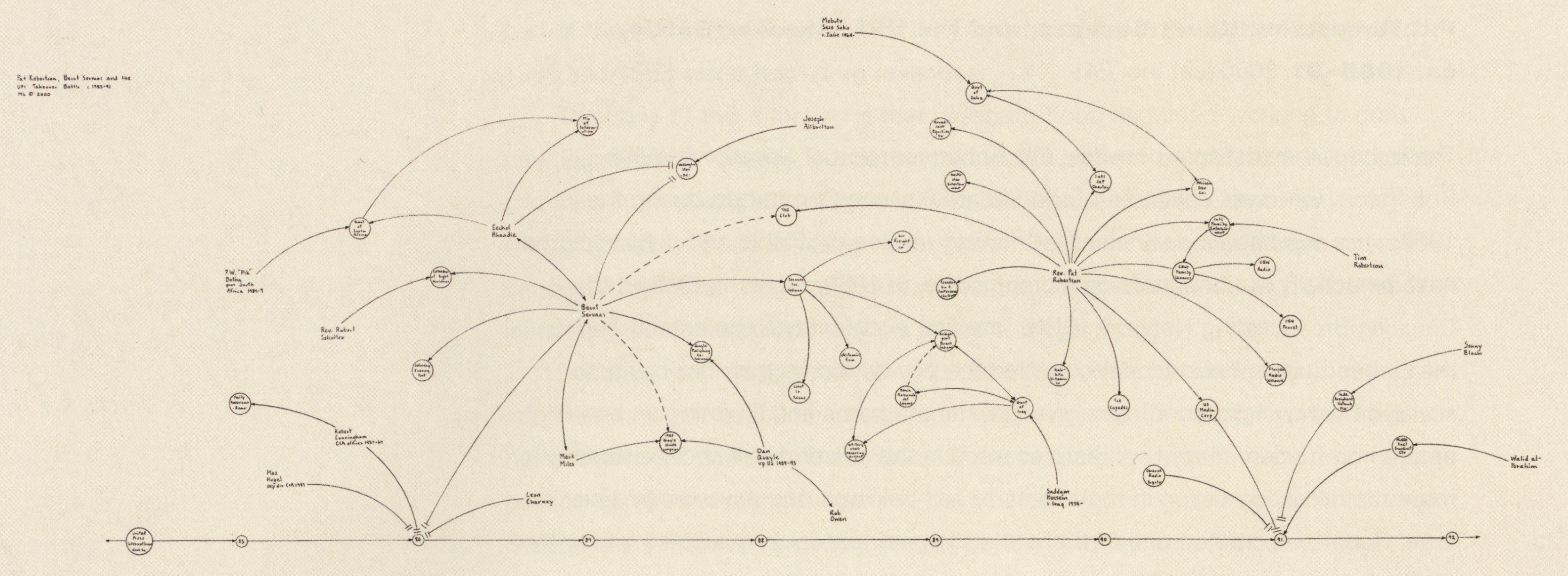}
\caption{\label{fig:lombardi} \textit{Pat Robertson,
Beurt Servaas and the UPI Takeover Battle, ca. 1985-91.} Drawing by Mark Lombardi, 2000. Image courtesy of Pierogi.}
\end{figure}

A \highlight{drawing} of a graph $G=(V,E)$ 
is an assignment of a unique point in the Euclidean plane to each node in $V$
and an assignment of a simple curve to each edge $(u,v) \in E$
such that the only two nodes in $V$ intersected by the curve are $u$ and $v$, which coincide with the endpoints of the curve.
A drawing is \highlight{straight-line} if 
every edge is drawn as a straight-line segment.
A drawing is \highlight{planar} if no two curves intersect 
except at a common shared endpoint.

Given a graph $G=(V,E)$,
let $d(u)$ denote the
\highlight{degree} 
of a node $u$, i.e., the number of edges incident to $u$ in $G$.
For a drawing of $G$, the \highlight{angular resolution} at a 
node $u$ is the minimum angle between any two edges incident to $u$.
A node has \highlight{perfect angular resolution} 
if its angular resolution is ${2\pi}/{d(u)}$, and
a drawing has perfect angular resolution if {\em every} node does.

Suppose that our input graph $G$ 
is a rooted tree $T$.
We say that $T$ is \highlight{ordered} if an ordering of the edges
incident to each node in $T$ is specified. Otherwise, $T$ is 
\highlight{unordered}.

In many drawings of graphs, nodes can be placed on an integer grid,
allowing one to get a bound on the area of the drawing by bounding
the dimensions of the grid.
Drawings with perfect angular resolution
cannot be placed on an integer 
grid unless the degrees of the nodes are constrained.
To see this, suppose we have a vertex $u$ and 
two of its (consecutive) neighbors all of 
which lie on Cartesian grid points.
From basic trigonometry, the area of the triangle defined by 
these points is $\frac{1}{2}ab\sin{\theta}$,
where $a$ and $b$ represent the lengths of the edges 
extending from $u$ and $\theta=2\pi/d(u)$ is the angle between 
these two edges.
By Pick's theorem, the area of this triangle is rational,
and consequently so is the square of the area.
Since $a^2$ and $b^2$ must also be rational, we conclude that
$\sin^2{\theta}$ must be rational.
This is false for nearly all values of $d(u)$,   
for example, when $d(u)=10$ and $\theta=\pi/5$.
Hence, if we wish to have perfect angular resolution,
we cannot require the nodes to have integer coordinates.

In this paper, our focus is on producing planar drawings of trees with perfect angular resolution in \emph{polynomial area}.
When defining the area of a drawing, it is important that the area measure prevents the drawing from being arbitrarily scaled down.
Our algorithms achieve polynomial area bounds according to the following three typical area measures for non-grid drawings.
In the first measure, the area is defined as the ratio of the area of a smallest disk enclosing the drawing to the square of the length of its shortest edge.
As two non-neighboring nodes can be arbitrarily close using this definition, one may be interested in using another definition of area instead, the (squared) ratio of the farthest pair of nodes to the closest pair of nodes in the drawing.
This area measure can also be defined in terms of \emph{edges} instead of \emph{nodes}, i.e., as the (squared) ratio of the farthest pair of edges to the closest pair of non-adjacent edges.

We define a \highlight{Lombardi drawing}~\cite{degkn-ldg-12} of a graph $G$ as a drawing of $G$
with perfect angular resolution such that each edge is drawn as a circular arc.
When measuring the angle formed by two circular arcs incident to a node $v$, 
we use the angle formed by the tangents of the two arcs at $v$.
Circular arcs are strictly more general than straight-line segments, since straight-line segments can be viewed as circular arcs
with infinite radius. Figure~\ref{fig:examples} shows an example of a
straight-line drawing and a Lombardi drawing for the same tree.
Thus, we can define our problems as follows:
\begin{enumerate}
\setlength{\itemsep}{2pt}
\item
Is it always possible to produce a straight-line drawing
of an unordered tree with perfect 
angular resolution and polynomial area?
\item
Is it always possible to produce a straight-line drawing
of an ordered tree with perfect 
angular resolution and polynomial area?
\item
Is it always possible to produce a Lombardi drawing
of an ordered tree with perfect 
angular resolution and polynomial area?
\end{enumerate}

\begin{figure}[tb]
  \centering
  \subfloat[Straight-line drawing for an unordered tree\label{fig:exmpl-straight}]{\includegraphics[page=1,scale=.48]{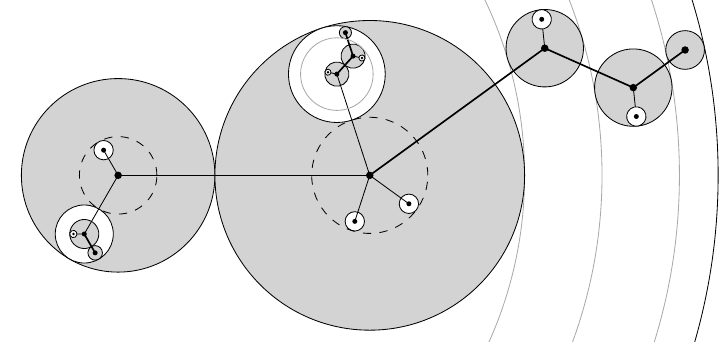}}
  \hfill
  \subfloat[Lombardi drawing for an ordered tree\label{fig:exmpl-lombardi}]{\includegraphics[page=3,scale=.48]{example-straight}}
  \caption{Two drawings of a tree $T$ with perfect
    angular resolution and polynomial area as produced by our
    algorithms. Bold edges are heavy edges, gray disks are heavy
    nodes, and white disks are light children. The root of $T$ is in
    the center of the leftmost disk.}
  \label{fig:examples}
\end{figure}

\paragraph{Related Work.}
Tree drawings have interested researchers 
for many decades: e.g., hierarchical drawings of binary trees date to the 1970's~\cite{ws-td-79}. 
Many improvements have been proposed since this early work, using space efficiently and generalizing to 
non-binary trees\cite{bj-iw-02,cgkt-oaars-02,ggt-putdo-96,%
DBLP:journals/ijcga/GargR03,%
DBLP:journals/jgaa/GargR04,rt-td-81,%
Shin2000175,w-np-90}.  
These drawings fail to meet the four constraints mentioned earlier, especially the constraint on angular resolution.

Several other methods directly aim to optimize angular
resolution in tree drawings.
Radial drawings of trees place nodes at the same distance 
from the root on a circle 
around the root node~\cite{e-df-92}.
Circular tree drawings are made of recursive 
radial-type layouts~\cite{mh-cdrt-98}. 
Bubble drawings~\cite{ga-bt-04} draw trees recursively with each subtree
contained within a circle disjoint from its siblings but
within the circle of its parent.
Balloon drawings~\cite{ly-bdrt-07} take a similar approach
and heuristically attempt to optimize space 
utilization and the ratio between the longest and shortest edges in the tree.
Convex drawings~\cite{ce-tcfoa-07} partition the plane into unbounded convex polygons with their boundaries formed by tree edges.
Although these methods provide several benefits, 
none of these methods guarantees that they satisfy all of 
the aforementioned constraints.

The notion of drawing graphs with edges that are circular arcs
or other nonlinear curves is certainly not new to graph
drawing. For instance, Cheng~{\em et al.}~\cite{cdgk-dpgca} use circular arcs
to draw planar graphs in an $O(n) \times O(n)$ grid
while maintaining bounded (but not perfect) angular resolution.
Similarly, Dickerson~{\em et al.}~\cite{degm-cdvnp-03} use circular-arc
polylines to produce planar confluent drawings of non-planar graphs,
Duncan {\em et al.}~\cite{DBLP:journals/ijfcs/DuncanEKW06} draw
graphs with fat edges that include circular arcs, and
Cappos {\em et al.}~\cite{cefk-sgeb-09} study simultaneous embeddings
of planar graphs using circular arcs.
Finkel and Tamassia~\cite{ft-cgduf-04} use a
force-directed method for producing curvilinear drawings, and Brandes and Wagner~\cite{bw-uglvt-00} use energy minimization methods to place B\'ezier splines that represent connections in a train network. 

In a separate paper~\cite{degkn-ldg-12} we study Lombardi drawings for classes of graphs other than trees. Unlike trees, not all planar graphs have planar Lombardi drawings~\cite{degkn-ldg-12,degkl-ppld-12} and it is an interesting open question to characterize the graphs that have a planar Lombardi drawing. Eppstein~\cite{e-pldsg-12} recently proved that all planar subcubic graphs have a planar Lombardi drawing, and that there are 4-regular planar graphs that do not have a planar Lombardi drawing. He also characterized the planar graphs that have planar Lombardi drawings corresponding to physical soap bubble clusters~\cite{e-gpsb-12}. L\"offler and N\"ollenburg~\cite{ln-pldo-12} showed that all outerpaths, i.e., outerplanar graphs whose weak dual is a path, have an outerplanar Lombardi drawing. In terms of the usability of Lombardi drawings, two independent user studies~\cite{phnk-ulgd-12,xrph-uscegv-12} examined the performance of Lombardi versus straight-line drawings for several graph reading tasks. While the study of Purchase \emph{et al.}~\cite{phnk-ulgd-12} showed an advantage  of straight-line drawings for two out of three tasks, but aesthetic preference for Lombardi drawings, the study of Xu \emph{et al.}~\cite{xrph-uscegv-12} did not show significant performance differences between the two types of drawings, but a strong aesthetic preference for straight-line drawings.

\paragraph{Our Contributions.} 
In this paper we present the first algorithm for producing
straight-line, crossing-free 
drawings of unordered trees that ensures perfect angular resolution and 
polynomial area. 
In addition we show, in Section~\ref{sec:exponentialArea},
that if the tree is ordered then it 
is not always possible to maintain perfect angular 
resolution and polynomial drawing area
when using straight lines for edges.
Nevertheless,
in Section~\ref{sec:lombardi}, we show that crossing-free polynomial-area
Lombardi drawings of ordered trees are possible.
That is, we show that
the answers to the questions posed above are ``yes,'' ``no,''
and ``yes,'' respectively.
Both algorithms require linear time in a model of computation, in which we can perform trigonometric computations and find roots of bounded degree polynomials in constant time.

\section{Straight-line drawings for unordered trees}\label{sec:straight-line-tree}
\label{sec:straightLine}

Let $T$ be an unordered tree with $n$ nodes.  We wish to construct a
straight-line drawing of $T$ with perfect angular resolution and
polynomial area.

The main idea of our algorithm is, similarly to the common bubble and
balloon tree constructions~\cite{ga-bt-04,ly-bdrt-07}, to draw the
children of each node of the given tree in a disk centered at that
node; however, our algorithm differs in several key respects in order
to achieve the desired area bounds and perfect angular resolution.

\subsection{Heavy Path Decomposition}\label{sec:hpd}

The initial step before drawing the tree $T$ is to create a heavy path
decomposition~\cite{ht-fafnc-84} of $T$.  To make the analysis
simpler, we assume $T$ is rooted at some arbitrary node $r$.  We let
$T_u$ represent the subtree of $T$ rooted at $u$, and $|T_u|$ the
number of nodes in $T_u$. A node $c$ is the \highlight{heavy child} of
$u$ if $|T_c| \geq |T_v|$ for all children $v$ of $u$.  In the case of
a tie, we arbitrarily designate one node as the heavy child.  We refer
to the non-heavy children as \highlight{light} and let $L(u)$ denote
the set of all light children of $u$.  The \highlight{light subtrees}
of $u$ are the subtrees of all light children of $u$. We define $l(u)
= 1+\sum_{v \in L(u)} |T_v|$ to be the \highlight{light size} of
$u$. An edge is called a \highlight{heavy edge} if it connects a heavy
child to its parent; otherwise it is a \highlight{light edge}. The set
of all heavy edges creates the \highlight{heavy-path decomposition} of
$T$, a disjoint set of (heavy) paths where every node in $T$ belongs
to exactly one path (possibly of length 0); see Figure~\ref{fig:heavyPath}. 
After an initial bottom-up traversal of $T$ to compute the number of descendants for every node, the heavy-path decomposition can be computed by a depth-first search that always descends to the heavy child of each node before visiting its light children in arbitrary order. 
This takes $O(n)$ time.

The heavy path
decomposition has the following important property.  If we treat
each heavy path as a node, and each light edge as connecting two
heavy-path nodes, we obtain a tree $H(T)$. This tree has height $h(T)
\le \log_2{n}$ since the size of each light child is less than half the
size of its parent. We refer to the \highlight{level} of a heavy path
as the depth of the corresponding node in the decomposition tree,
where the root has depth 0. We extend this notion to nodes, i.e., the
level of a node $v$ is the level of the heavy path to which $v$
belongs.
  \begin{figure}[tb]
    \begin{center}
      \includegraphics[scale=.75,page=2]{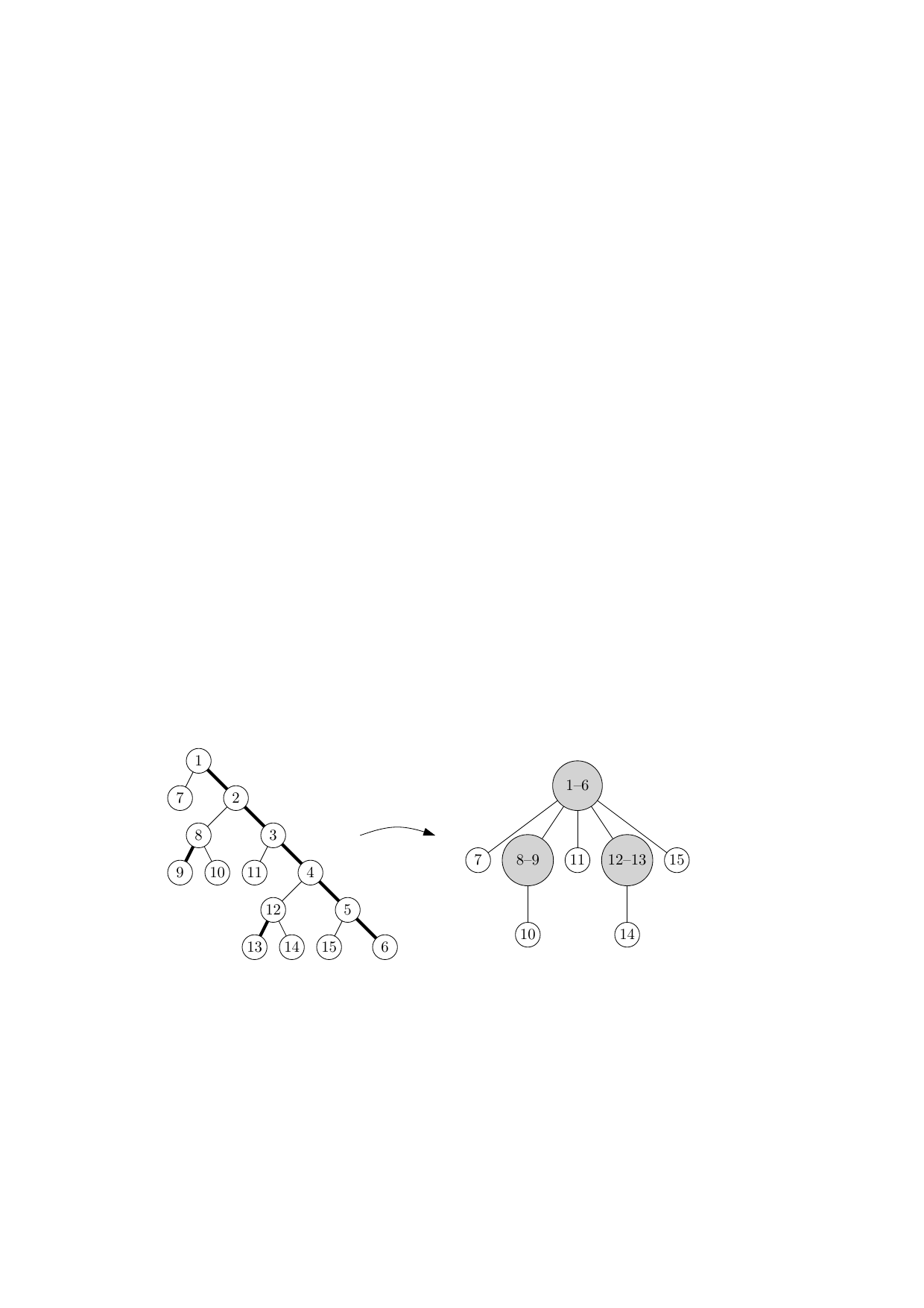}
    \end{center}
    \caption{The tree $T$ on the left highlights its heavy edges. The
      corresponding heavy-path decomposition tree $H(T)$ on the right has each
      heavy path represented by a single node.}
    \label{fig:heavyPath}
  \end{figure}

\subsection{Drawing Algorithm}
\label{sec:drawalg}

Our algorithm draws $T$ incrementally in the order of a depth-first
traversal of the corresponding heavy-path decomposition tree $H(T)$,
i.e., given drawings of the light subtrees of a heavy-path node $P$ in
$H(T)$ we construct a drawing of $P$ and its subtrees. Let $P = (v_1,
\ldots, v_k)$ be a heavy path. Then we draw each node $v_i$ of $P$ in
the center of a disk $D_i$ and place smaller disks containing the
drawings of the light children of $v_i$ and their descendents around
$v_i$ in two concentric annuli of $D_i$. We guarantee perfect angular
resolution at $v_i$ by connecting the centers of the child disks with
appropriately spaced straight-line edges to $v_i$. Next, we create the
drawing of $P$ and its descendents within a disk $D$ by placing $D_1$
in the center of $D$ and $D_2, \dots, D_k$ on concentric circles
around $D_1$. We show that the radius of $D$ is linear in the number
$n(P)$ of nodes descending from $P$ and exponential in the level of
$P$. In this way, at each step downwards in the heavy path
decomposition, the total radius of the disks at that level shrinks by
a constant factor, allowing room for disks at lower levels to be
placed within the higher-level disks. Figure~\ref{fig:exmpl-straight}
shows a drawing of an unordered tree according to our method.

Before we can describe the details of our construction we need the
following geometric property.  Define an
\highlight{$(R,\delta)$-wedge}, $\delta \le \pi$ as a sector of angle
$\delta$ of a radius-$R$ disk; see Figure~\ref{fig:circleFit}.

\begin{lemma}
\label{lemma:circleFit}
The largest disk that fits inside an $(R,\delta)$-wedge has
radius $r =
R\frac{\sin(\delta/2)}{1+\sin(\delta/2)}$.
\end{lemma}

\begin{proof}
  The largest disk inside the $(R,\delta)$-wedge
  touches the circular arc and both radii of the wedge. Thus we
  immediately obtain a right triangle formed by the apex of the wedge,
  the center of the disk we want to fit, and one of its tangency
  points with the two radii of the wedge; see
  Figure~\ref{fig:circleFit}. This triangle has one side of length $r$
  and hypothenuse of length $R-r$. From $\sin (\delta/2) = \frac{r}{R-r}$ we
  obtain $r = R\frac{\sin(\delta/2)}{1+\sin(\delta/2)}$. \qed
  \begin{figure}[htb]
    \centering
    \includegraphics[]{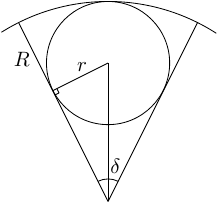}
    \caption{An
      $(R,\delta)$-wedge and the largest disk that can be placed inside it.}
    \label{fig:circleFit}
  \end{figure}
\end{proof}

In the next lemma we show how to draw a single node $v$ of a heavy
path $P$ given drawings of all its light subtrees.
\begin{lemma}\label{lem:heavynode:straight}
  Let $v$ be a node of $T$ at level $j$ of $H(T)$. For each light child $u \in L(v)$
  assume there is a disk $D_u$ of radius $r_u = 2\cdot 8^{h(T)-j-1}
  |T_u|$ that contains a fixed drawing of $T_u$ with perfect angular
  resolution and such that $u$ is in the center of~$D_u$. Then we can
  construct a drawing of $v$ and its light subtrees inside a disk $D$ in~$O(d(v))$ time such that the following properties hold:
  \begin{enumerate}
  \item\label{dsle:item:edge} the edge between $v$ and any light child $u
    \in L(v)$ is a straight-line segment that does not intersect any
    disk other than $D_u$;
  \item\label{dsle:item:heavyedge} one or two rays that do not intersect any disk $D_u$ are reserved for drawing the heavy edges incident to $v$ or the light edge to the parent of $v$;
  \item\label{dsle:item:disks} any two disks $D_u$ and $D_{u'}$ for two light children $u \ne u'$ are disjoint;
  \item\label{dsle:item:angres} the angular resolution of $v$ is $2\pi/d(v)$;
  \item\label{dsle:item:angle} the angle between the two rays reserved for the heavy edges or the light parent edge is at least
    $2\pi/3$ and at most $4\pi/3$ (if these two rays exist);
  \item\label{dsle:item:area} the disk $D$ has radius $r_v = 8^{h(T)-j} l(v)$.
  \end{enumerate}
\end{lemma}
\begin{proof}
  We assume that the ray $\rho_0$ for the (heavy or light) edge to the parent of $v$ is directed
  horizontally to the left (for the root of $T$ its unique heavy edge takes this role). We draw a disk $D$ with radius $r_v$
  centered at $v$ and create $d(v)$ \highlight{spokes}, i.e., rays
  extending from $v$, that are equally
  spaced by an angle of $2\pi/d(v)$ and include the ray $\rho_0$. In order to achieve the angular resolution (property~4), every neighbor of $v$
  must be placed on a distinct spoke. The main difficulty is that there can be child disks
  that are too large to place without overlap on adjacent spokes
  inside $D$.

  Let $D_{\max}$ be the largest disk $D_u$ of any $u \in L(v)$ and let
  $r_{\max}$ be its radius. We split $D$ into an outer annulus $A$ and
  an inner disk $B$ by a concentric circle of radius $R = r_v -
  2r_{\max}$; see Figure~\ref{fig:light-straight}. We define a child
  $u \in L(v)$ to be a \highlight{small} child, if  $r_u \le R
  \frac{\sin (\pi/d(v))}{1+ \sin (\pi/d(v))}$, and to be a
  \highlight{large} child otherwise. We further say $D_u$ is a small
  (large) disk if $u$ is a small (large) child.  We denote the number
  of small children as $n_s$ and the number of large children as
  $n_l$.  By Lemma~\ref{lemma:circleFit} we know that any small disk
  $D_u$ can be placed inside an $(R,2\pi/d(v))$-wedge. This means that
  we can place all $n_s$ small disks centered on any subset of $n_s$ spokes
  inside $B$ without violating property~\ref{dsle:item:disks}. So once we
  have placed all large disks correctly then we can always distribute
  the small children on the unused spokes.

  \begin{figure}[tb]
    \centering
    \subfloat[All light subtrees fit into a disk of radius $r_v/4$
    and are split into small and large disks.\label{fig:small-large-children}]{\makebox[.48\textwidth]{\includegraphics[page=1,scale=.33]{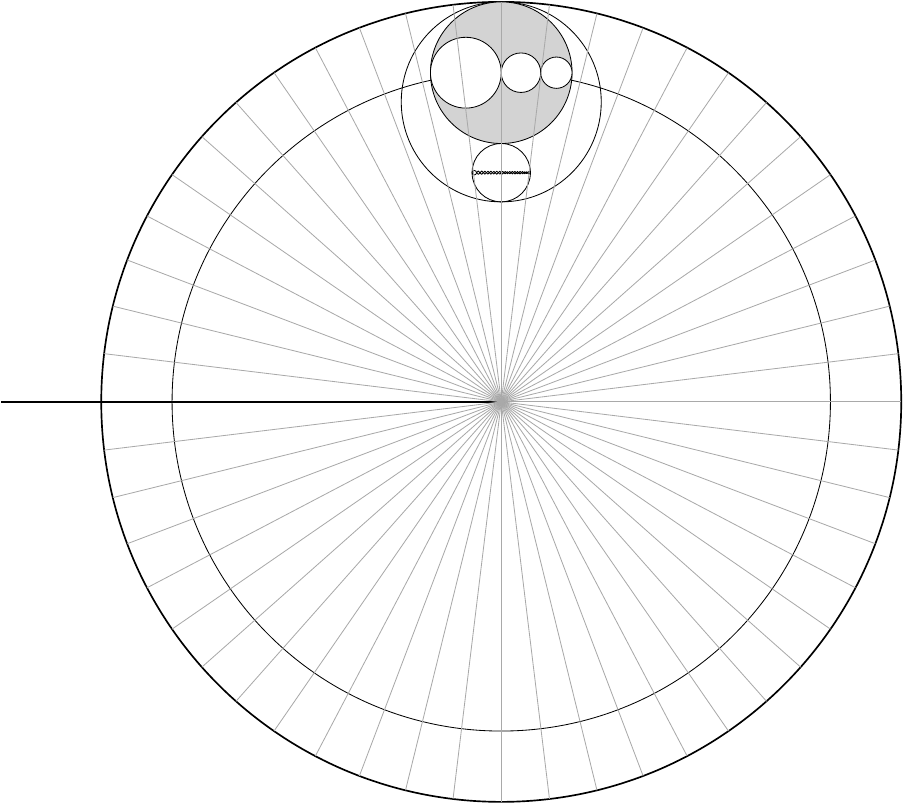}}} 
    \hfill
    \subfloat[Large disks are placed in the outer annulus and small disks in
    the inner disk.]{\includegraphics[page=2,scale=.33]{straight-tree}}
    \caption{Drawing a node $v$ and the subtrees of its light children $L(v)$.}
    \label{fig:light-straight}
  \end{figure}

  We place all large disks in the outer annulus $A$. Observe
  that \[4 \sum_{u \in L(v)} r_u = 4 \sum_{u \in L(v)} 2\cdot
  8^{h(T)-j-1} |T_u| = 8^{h(T)-j} \sum_{u \in L(v)} |T_u| < 8^{h(T)-j}
  l(v) = r_v,\] i.e., we can place all light children on the diameter
  of a disk of radius at most $r_v/4$. If we order all light children
  along that diameter by their size we can split them into one disk
  containing the large disks and one containing the small disks;
  see Figure~\ref{fig:small-large-children}.
  
  Assume that the large disks are arranged on the horizontal diameter
  of their disk and that this disk is placed vertically above $v$ and
  tangent to $D$ as shown in
  Figure~\ref{fig:small-large-children}. Since that disk has radius at
  most $r_v/4$ we can use Lemma~\ref{lemma:circleFit} to show that it
  always fits inside an $(r_v,\pi/4)$-wedge. If we now translate the
  large disks vertically upward onto a circle centered at $v$ with
  radius $r_v - r_{\max}$ then they are still disjoint and they all
  lie in the intersection of $A$ and the $(r_v,\pi/4)$-wedge. We now
  rotate them counterclockwise around $v$ until the leftmost disk
  $D_{\max}$ touches the ray $\rho_0$. Thus all large disks
  are placed disjointly inside a $\pi/4$-sector of $A$. However, they
  are not centered on the spokes yet.

  Beginning from the leftmost large disk, we rotate each large disk
  $D_u$ and all its right neighbors clockwise around $v$ until $D_u$
  snaps to the next available spoke. Clearly, in each of the $n_l$
  steps we rotate by at most $2\pi/d(v)$ in order to reach the next
  spoke.

  We now bound the number $n_l$ of large children. By definition a
  child is large if $r_u = 2\cdot 8^{h(T)-j-1} |T_u| > (r_v-2r_{\max})
  \frac{\sin (\pi/d(v))}{1+ \sin (\pi/d(v))}$. We also have $r_v \ge
  8^{h(T)-j} \sum_{u \in L(v)} |T_u|$. Let $w$ be the light child of
  $v$ with maximum disk radius $r_w = r_{\max}$. Then $r_w = 2 \cdot
  8^{h(T)-j-1} |T_w|$ and hence $r_v - 2r_{\max} \ge 4\cdot
  8^{h(T)-j-1} (2\sum_{u \in L(v)}|T_u| - |T_w|)$. So for a light
  child $u$ to be large, its subtree $T_u$ has to contain $|T_u| > 2
  \cdot (2\sum_{u \in L(v)}|T_u| - |T_w|) \frac{\sin (\pi/d(v))}{1+
    \sin (\pi/d(v))}$ nodes. This yields
  \[
  n_l < 1 + \frac{\sum_{u \in L(v)}|T_u| - |T_w|}{2 \cdot (2\sum_{u
      \in L(v)}|T_u| - |T_w|) \frac{\sin (\pi/d(v))}{1+ \sin
      (\pi/d(v))}} < 1 + \frac{1+ \sin (\pi/d(v))}{4 \sin (\pi/d(v))}.
  \]
  
  From this we obtain that for $d(v) \ge 5$ we have $n_l <
  3d(v)/8$. So for $d(v) \ge 5$ we can always place all large disks
  correctly on spokes inside at most half of the outer annulus $A$
  since we initially place all large disks in a $\pi/4$-wedge and then
  enlarge that wedge by at most $3d(v)/8 \cdot 2\pi/d(v) = 3\pi/4$
  radians. For $d(v) \le 2$ there are no light children, for $d(v)=3$ we
  immediately place the disk of the single light child on its spoke without
  intersecting the other spokes, and for $d(v)=4$ we place the disks of the two light children on opposite vertical spokes separated by the two
  horizontal spokes, which does not produce any intersections either. If $v$ is the root of $T$ and $d(v) \le 4$ the disks of the light children (at most three) are placed analogously.

  Since we require at most half of $A$ to place all large children, we
  can assign the second ray for a heavy edge (if it exists) to the spoke exactly opposite of
  $\rho_0$ if $d(v)$ is even. If $d(v)$ is odd, we choose
  one of the two spokes whose angle with $\rho_0$ is
  closest to $\pi$. 
  Finally, we arbitrarily assign the $n_s$ small children to the
  remaining free spokes inside the inner disk $B$.

  Thus the drawing for $v$ and its light subtrees constructed in this fashion satisfies properties 1--6. 

	It remains to show that the drawing can be constructed in $O(d(v))$ time. 
	In order to avoid unnecessary updates of the node coordinates, we store the position of each node (in polar coordinates) relative to its parent, i.e., relative to $v$. 
	Thus we can change the placement of the whole subtree $T_v$ by changing only the position of its root node $v$.
	We first assign the large children in arbitrary order to their spokes. 
	The next feasible spoke is easily obtained from the position and radius of the previous disk and the radius of the next disk.
	Then we place the small children on the remaining spokes and reserve the stub for the heavy child.
	It is sufficient to assign a unique spoke ID in $\{2, 3, \dots, d(v)\}$ to each child, where spoke~1 connects to the parent of $v$. 
	This spoke order can be interpreted both clockwise and counterclockwise, which will be useful for drawing the heavy paths in the next step.
  Since the placement of any child disk requires constant time, the $O(d(v))$ time bound follows. \qed
\end{proof}

Lemma~\ref{lem:heavynode:straight} shows how to draw a single heavy
node $v$ and its light subtrees. It also applies to the root of $T$ if
we ignore the incoming heavy edge, and to the root node $v_1$ of a
heavy path $P=(v_1, \ldots, v_k)$ at level $l \ge 1$ if we consider
the light edge $uv_1$ to its parent $u$ as a heavy edge for $v_1$. The last node $v_k$ of $P$ is always a leaf, which is trivial
to draw. For drawing an entire heavy path $P = (v_1, \ldots, v_k)$ we
need to link the drawings of the heavy nodes into a path.

\begin{lemma}\label{lem:heavypath:straight}
  Given a heavy path $P=(v_1, \ldots, v_k)$ and a drawing for each
  $v_i$ and its light subtrees inside a disk $D_i$ of radius $r_i$, we
  can draw $P$ and all its descendants inside a disk $D$ in $O(k)$ time such that the
  following properties hold:
  \begin{enumerate}
  \item\label{dsle:item:hp:edges} the heavy edge $v_iv_{i+1}$ is a
    straight-line segment that does not intersect any disk other than
    $D_i$ and $D_{i+1}$;
  \item\label{dsle:item:hp:light} the light edge connecting $v_1$ and its
    parent does not intersect the drawing of $P$;
  \item\label{dsle:item:hp:disjoint} any two disks $D_i$ and $D_j$ for
    $i\ne j$ are disjoint;
  \item\label{dsle:item:hp:angres} the drawing has perfect angular
    resolution;
  \item\label{dsle:item:hp:area} the radius $r$ of $D$ is $r=2\sum_{i=1}^k r_i$.
  \end{enumerate}
\end{lemma}
\begin{proof}
  Let $v_1$ be the root of $P$ and let $u$ be the parent of $v_1$
  (unless $P$ is the heavy path at level 0). We place the disk $D_1$
  at the center of $D$ and assume that the edge $uv_1$ extends
  horizontally to the left. We create $k-1$ vertical strips $S_2,
  \ldots, S_k$ to the right of $D_1$, each $S_i$ of width $2r_i$; see
  Figure~\ref{fig:v-strips}. Each disk $D_i$ will be placed inside its
  strip $S_i$. We extend the ray induced by the stub
  reserved for the heavy edge $v_1v_2$ from $v_1$ until it intersects the
  vertical line bisecting $S_2$ and place $v_2$ at this intersection
  point. By property~\ref{dsle:item:angle} of
  Lemma~\ref{lem:heavynode:straight} we know that the angle between
  the two heavy edges incident to a heavy node is between $2\pi/3$ and
  $4\pi/3$. Thus $v_2$ is inside a right-open $2\pi/3$-wedge $W$ that
  is symmetric to the $x$-axis. Now for $i=2, \ldots, k-1$ we extend
  from $v_i$ the stub of the heavy edge $v_iv_{i+1}$ into a ray and
  place $v_{i+1}$ at the intersection of that ray and the bisector of
  $S_{i+1}$. When placing the disk $D_{i+1}$ centered at $v_{i+1}$, Lemma~\ref{lem:heavynode:straight} leaves the two valid options of arranging the subtrees of $v_{i+1}$ inside $D_{i+1}$ in clockwise or counterclockwise order. 
We pick the ordering for which the slope of the ray $v_{i+1}v_{i+2}$ is closer to 0, i.e., $v_{i+1}v_{i+2}$ makes a right turn if $v_iv_{i+1}$ has a positive slope and a left turn otherwise. 
(If $\angle v_iv_{i+1}v_{i+2} = \pi$ either way is fine.)
Then by using induction and property~\ref{dsle:item:angle} of Lemma~\ref{lem:heavynode:straight} the ray $v_{i+1}v_{i+2}$ stays within~$W$.

  \begin{figure}[tb]
    \centering
    \subfloat[Placing disks in vertical strips.\label{fig:v-strips}]{\includegraphics[page=1,scale=.9]{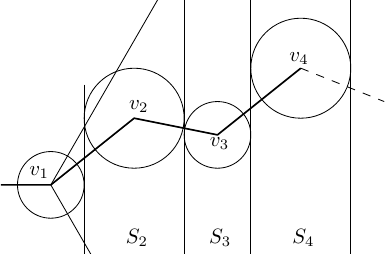}}
    \hfill
    \subfloat[Final transformation of the drawing.\label{fig:strips2annuli}]{\includegraphics[page=2,scale=.9]{v_strips}}
    \caption{Constructing the heavy path drawing by appending drawings of its heavy nodes.}
    \label{fig:straight-path}
  \end{figure}

  Since each disk $D_i$ is placed in its own strip $S_i$, no two disks
  intersect (property~\ref{dsle:item:hp:disjoint}) and since heavy edges
  are straight-line segments within two adjacent strips, they do not
  intersect any non-incident disks (property~\ref{dsle:item:hp:edges}). The
  light edge $uv_1$ is completely to the left of all strips and thus
  does not intersect the drawing of $P$
  (property~\ref{dsle:item:hp:light}). Since we were using the existing
  drawings (or their mirror images) of all heavy nodes, their perfect
  angular resolution is preserved (property~\ref{dsle:item:hp:angres}). 

  The current drawing has a width that is equal to the sum of the
  diameters of the disks $D_1, \ldots, D_k$. However, it does not yet
  necessarily fit into a disk $D$ centered at $v_1$ whose radius
  equals that sum of the diameters. To achieve this we create $k-1$
  annuli $A_2, \ldots, A_k$ centered around $v_1$, each $A_i$ of width
  $2r_i$. Then, for $i=2, \ldots, k$, we either shorten or extend the
  edge $v_{i-1}v_i$ until $D_i$ is contained in its annulus $A_i$; see
  Figure~\ref{fig:strips2annuli}. At each step $i$ we treat the
  remaining path $(v_i, \ldots, v_k)$ and its disks $D_i, \ldots, D_k$
  as a rigid structure that is translated as a whole and in parallel to the heavy edge $v_{i-1}v_i$; see the 
  translation vectors indicated in Figure~\ref{fig:strips2annuli}. In
  the end, each disk $D_i$ is contained in its own annulus $A_i$ and
  thus all disks are still pairwise disjoint. Since we only stretch or
  shrink edges of an $x$-monotone path but do not change any edge
  directions, the whole transformation preserves the previous
  properties of the drawing. Clearly, all disks now lie inside a disk
  $D$ of radius $r=r_1 + 2\sum_{i=2}^k r_i \le 2\sum_{i=1}^k r_i$
  (property~\ref{dsle:item:hp:area}).

	It remains to show the $O(k)$ time bound for drawing $P$.
	Here we store the coordinates of each $v_i$ in $P$ not only relative to the parent node $v_{i-1}$ but also relative to the root $v_1$ of $P$.
	Initially, each disk is placed in its vertical strip as shown in Figure~\ref{fig:v-strips} and the order of the children is selected as either clockwise or counterclockwise as needed. (Recall that changing the direction can be done in constant time.)
	Then for $i=2, \dots, k$ each disk $D_i$ is translated into its annulus $A_i$; see Figure~\ref{fig:strips2annuli}.
	In this process the coordinates of $v_i$ with respect to $v_1$ can become temporarily invalid but the coordinates relative to the predecessor node $v_{i-1}$ remain valid.
	Given the final position of $D_i$ in $A_i$ and the current position of $D_{i+1}$ with respect to $v_i$ we obtain the final position of $D_{i+1}$ in $A_{i+1}$, both with respect to $v_1$ and to $v_i$. 
	The assignment of the coordinates for every node of $P$ thus takes~$O(k)$ time.
\qed
\end{proof}

Combining Lemmas~\ref{lem:heavynode:straight}
and~\ref{lem:heavypath:straight} yields the following theorem:

\begin{theorem}\label{thm:straight}
  Given an unordered tree $T$ with $n$ nodes we can find, in $O(n)$ time and space, a
  crossing-free straight-line drawing of $T$ with perfect angular
  resolution that fits inside a disk $D$ of radius $2\cdot 8^{h(T)}
  n$, where $h(T)$ is the height of the heavy-path decomposition of
  $T$. Since $h(T) \le \log_2{n}$ the radius of $D$ is no more than $2n^4$. 
\end{theorem}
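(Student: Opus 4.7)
The plan is a straightforward induction on the heavy-path decomposition tree $H(T)$, working from its leaves up to its root, using Lemma~\ref{lem:heavynode:straight} to draw each individual heavy node together with its light subtrees, and Lemma~\ref{lem:heavypath:straight} to string the drawings of the nodes of a single heavy path together. The inductive claim will be: for every heavy path $P$ at level $j$ of $H(T)$, the algorithm produces a crossing-free straight-line drawing of $P$ and all of its descendants in $T$, with perfect angular resolution, inside a disk of radius $2\cdot 8^{h(T)-j}\,n(P)$, where $n(P)$ denotes the number of nodes of $T$ lying in the subtrees rooted at the nodes of $P$.

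For the inductive step, consider a heavy path $P=(v_1,\ldots,v_k)$ at level $j$. By the induction hypothesis, each light child $u\in L(v_i)$ is the root of a heavy path at level $j+1$ whose drawing has already been placed in a disk of radius $2\cdot 8^{h(T)-j-1}|T_u|$, which is exactly the size required by the hypotheses of Lemma~\ref{lem:heavynode:straight}. Applying that lemma at each $v_i$ yields a disk $D_i$ of radius $r_i = 8^{h(T)-j}\,l(v_i)$ containing $v_i$ and all of its light subtrees with perfect angular resolution and with the heavy-edge angle guarantee of property~\ref{dsle:item:angle}. Feeding $D_1,\ldots,D_k$ into Lemma~\ref{lem:heavypath:straight} then produces a drawing of $P$ and its descendants in a disk of radius
\[
2\sum_{i=1}^k r_i \;=\; 2\cdot 8^{h(T)-j}\sum_{i=1}^k l(v_i) \;=\; 2\cdot 8^{h(T)-j}\,n(P),
\]
since the light sizes $l(v_1),\ldots,l(v_k)$ along $P$ partition the nodes of $P$ and its descendants. (For the endpoints of $P$, Lemma~\ref{lem:heavynode:straight} is applied with the parent light edge treated as the incoming heavy edge at $v_1$, and $v_k$ being a leaf is trivial, exactly as discussed in the remarks preceding Lemma~\ref{lem:heavypath:straight}.)

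Applying the inductive claim at $j=0$ to the heavy path through the root of $T$ yields a drawing of all of $T$ inside a disk of radius $2\cdot 8^{h(T)}\,n$. Because every step down the heavy-path decomposition at least halves the subtree size, $h(T)\le \log_2 n$, so $8^{h(T)}\le 2^{3\log_2 n}=n^3$, and the overall radius is at most $2n^4$.

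The one point that warrants careful verification is that the inductive composition really preserves planarity: no edge introduced at level $j$ may intrude into a disk used to hold some level-$(j+1)$ (or deeper) subtree other than one of its own endpoints. This is immediate from property~\ref{dsle:item:edge} of Lemma~\ref{lem:heavynode:straight} and property~\ref{dsle:item:hp:edges} of Lemma~\ref{lem:heavypath:straight}: every newly drawn edge is a straight-line segment whose only interactions with child disks occur at its two endpoint disks, while disk disjointness (properties~\ref{dsle:item:disks} and~\ref{dsle:item:hp:disjoint}) rules out any interaction between siblings at the same level. Since each recursively drawn subtree is entirely contained in its enclosing disk, no crossings can be introduced by the composition, and the final drawing is crossing-free, has perfect angular resolution at every vertex, and fits in a disk of radius at most $2n^4$, as claimed.
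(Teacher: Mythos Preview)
Your proof is correct and follows essentially the same approach as the paper: combine Lemma~\ref{lem:heavynode:straight} and Lemma~\ref{lem:heavypath:straight} inductively along the heavy-path decomposition, use $\sum_{i=1}^k l(v_i)=n(P)$ to get the radius $2\cdot 8^{h(T)-j}n(P)$, and specialize to $j=0$. Your write-up is somewhat more explicit than the paper's (spelling out the induction, the treatment of the endpoints of $P$, and the planarity check), but the argument is the same.
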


\begin{proof}
  From Lemma~\ref{lem:heavynode:straight} we know that, for each node
  $v$ of a heavy path $P$ at level $j$, the radius of the disk $D$
  containing $v$ and all its light subtrees is $r_v = 8^{h(T)-j}
  l(v)$.
  Lemma~\ref{lem:heavypath:straight} yields that $P = (v_1, \dots, v_k)$ and all its
  descendants can be drawn in a disk of radius $r = 2 \sum_{i=1}^k
  r_{v_i} = 2 \cdot 8^{h(T)-j} \sum_{i=1}^k l(v_i) = 2 \cdot
  8^{h(T)-j} n(P)$, where $n(P)$ is the number of nodes of $P$ and its
  descendants. This holds, in particular, for the heavy path $\hat{P}$
  at the root of $H(T)$.

	It remains to show the linear time and space bound. 
	As indicated in Section~\ref{sec:hpd} the heavy path decomposition is computed in linear time and has linear size. 
	Since the drawing subroutines for nodes and heavy paths in Lemmas~\ref{lem:heavynode:straight} and~\ref{lem:heavypath:straight} both require linear time and are called only once for each node and heavy path, respectively, these steps take $O(n)$ time in total. 
	In the final step we set the coordinates of the root of $T$ to $(0,0)$ and propagate the absolute positions of all nodes from top to bottom. 
	Thus the entire process takes~$O(n)$ time. As we only store a constant amount of information with each node of $T$, it follows that the space needed is also~$O(n)$.
	\qed
\end{proof}

\begin{corollary}\label{cor:area-straight}
	The drawing of $T$ according to Theorem~\ref{thm:straight} requires polynomial area.
\end{corollary}

\begin{proof}
Our first definition of area (the ratio of the area of the smallest enclosing disk over the square of the length of the shortest edge) yields an area value of at most $4 \pi n^8$ for the drawing of $T$ since the shortest edges have length at least 1 and $D$ has radius at most $2n^4$.
In the alternative notions of area defined by the (squared) ratio of the farthest distance of any two nodes (or edges) to the smallest distance of any two nodes (or non-adjacent edges) a similar polynomial area bound holds.
Clearly the farthest distance in both cases is at most the diameter $4n^4$ of $D$.
Furthermore, every child node in the drawing is contained in its own overlap-free disk of radius~1 and hence the closest pair of nodes has distance at least~1.
For the closest pair of edges there is also a lower distance bound of~1. 
In every step of the recursive drawing procedure a subtree $T_u$ is drawn inside a disk $D_u$ with the property that there is an empty outer annulus of width at least~1 in $D_u$.
When composing different subdrawings, this ensures that their edges are kept far enough apart.
Thus it is easy to see by induction that no pair of edges can get closer than distance~1.
\qed
\end{proof}

\section{Straight-line drawings for ordered trees}
\label{sec:exponentialArea}

In many cases, the ordering of the children around each node of a
tree is given; that is, the tree is ordered (or has a fixed
combinatorial embedding).  In the previous section we relied on the
freedom to order subtrees as needed to achieve a polynomial area
bound. Hence that algorithm cannot be applied to ordered trees with
fixed embeddings. As we now show, there are ordered trees that have no
straight-line crossing-free drawings with perfect
angular resolution  \emph{and} polynomial area.

Specifically, we present a class of ordered trees for which any
straight-line crossing-free drawing with perfect angular
resolution requires exponential area. 
We define the \emph{3-legged Fibonacci caterpillar} of length $k$ to be an ordered caterpillar tree $T_k$, whose spine (the subgraph obtained after removing all leaves) is a $k$-node path $P=(p_1, \dots, p_k)$ in which every node $p_i$ has degree 5 in $T_k$, hence three legs.
The embedding of $T_k$ specifies that in every node $p_i$ ($i=2, \dots, k-1$) the edge $p_i p_{i+1}$ is the immediate counterclockwise successor of $p_i p_{i-1}$.
Hence in any straight-line drawing of $T_k$ with perfect angular resolution, the spine is represented as a simple polyline with $k-2$ right turns of $108^\circ$, forming a $72^\circ$ angle between adjacent edges; see Figure~\ref{fig:fibWoolyWorm}.

\begin{figure}[hbt]
\centering
\begin{minipage}[b]{4cm}
\subfloat[\label{fig:fibo-in}]{\includegraphics[width=4cm,page=2]{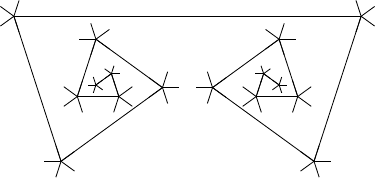}}\\
\subfloat[\label{fig:fibo-lombardi}]{\includegraphics[width=4cm,page=3]{fibonacciWorm}}
\end{minipage}
\hfil
\subfloat[\label{fig:fibo-straight}]{\includegraphics[page=1,scale=0.8]{fibonacciWorm}}
\caption{(a) A Fibonacci caterpillar; (b) Lombardi drawing; (c) Straight-line drawing with perfect angular resolution and exponential area.}
\label{fig:fibWoolyWorm}
\end{figure}

We define a \emph{clockwise (counterclockwise) spiral} to be a polyline $(q_1, \dots, q_k)$ such that for any index $3 \le i \le k-1$ the polyline $(q_1, \dots, q_{i})$ lies to the right (left) of the ray $\overrightarrow{q_i q_{i+1}}$.
First, we show that any drawing of the Fibonacci caterpillar contains a large spiral.

\begin{lemma}\label{lem:spiral}
	In any straight-line drawing with perfect angular resolution of $T_k$ the spine $P$ contains a spiral consisting of at least $k/2$ nodes.
\end{lemma}
\begin{proof}
	For $k \le 5$,  because of the required fixed angle turns, either $P=(p_1, \dots, p_k)$ is a clockwise spiral or its reverse $\overline{P}=(p_k, \dots, p_1)$ is a counterclockwise spiral.
	So let $k > 5$. 
	For $i=1, \dots, k-1$ we abbreviate the edge $p_i p_{i+1}$ as $e_i$. 
	We look at sequences $S_i$ of four consecutive edges $(e_i, e_{i+1}, e_{i+2}, e_{i+3})$ of $P$ and distinguish two cases.
	If the extension of edge $e_{i+3}$ into a ray $\overrightarrow{p_{i+3} p_{i+4}}$ intersects $e_i$ or $e_{i+1}$, we say the sequence $S_i$ is \emph{locked}, and otherwise we say it is \emph{open}; see Figure~\ref{fig:spiral-locked-open}.
	Starting from $i=1$ we scan the spine $P$ for the first occurrence $j$ of a locked sequence $S_j$. 
	Then the prefix path $(p_1, \dots, p_{j+3})$ is a clockwise spiral.
		
	\begin{figure}[htbp]
		\centering
		\subfloat[\label{fig:locked1}]{\includegraphics[scale=1,page=2]{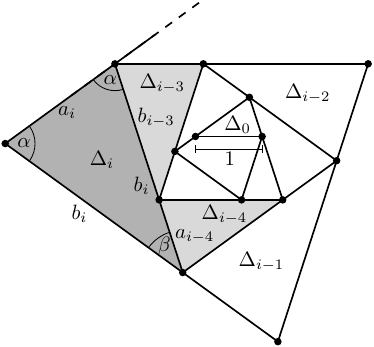}}
		\hfill
		\subfloat[\label{fig:locked2}]{\includegraphics[scale=1,page=3]{spiral-area}}
		\hfill
		\subfloat[\label{fig:open}]{\includegraphics[scale=1,page=4]{spiral-area}}
		\caption{Two locked edge sequences (a) and (b) and an open sequence (c).}
		\label{fig:spiral-locked-open}
	\end{figure}

	Furthermore, for any $i \ge j$ the sequence $S_i$ is also locked, as can be seen by induction.
	Let $S_i$ be a locked sequence. 
	Then node $p_{i+5}$ lies inside the quadrilateral (or triangle) defined by edges $e_i,e_{i+1}, e_{i+2}$ and the ray $\overrightarrow{p_{i+3} p_{i+4}}$, and due to the angle of $72^\circ$ between $e_{i+3}$ and $e_{i+4}$ the ray  $\overrightarrow{p_{i+4} p_{i+5}}$ must intersect either $e_{i+1}$ or $e_{i+2}$; see Figures~\ref{fig:locked1} and~\ref{fig:locked2}.
	This means that $S_{i+1}$ is also a locked sequence.
	
	By observing that if a sequence $S_i = (e_i,e_{i+1}, e_{i+2}, e_{i+3})$ is locked, then the reverse sequence $\overline{S_i} = (e_{i+3}, e_{i+2}, e_{i+1}, e_i)$ is open, the same reasoning as before yields that the suffix path $(p_j, p_{j+1},\dots, p_k)$ in reverse order $(p_k, \dots, p_{j+1}, p_j)$ is a counterclockwise spiral. Clearly, one of the two spirals contains at least $k/2$ nodes. \qed
\end{proof}

Now that we know that there is a large spiral in $T_k$ we show that drawing the spiral requires exponential area.

\begin{lemma}\label{lem:spiral-area}
	The drawing of a spiral of length $n$ requires exponential area $\Omega(c^n)$ for some $c>1$.
\end{lemma}
\begin{proof}
	Without loss of generality we consider a path $P$ of length $n\ge 6$ that forms a clockwise spiral. 
	Figure~\ref{fig:spiral-area} shows the construction of a minimum-area drawing of $P$. 
	Let the minimum length of any edge be $1$. 
	We draw $e_1$ and $e_2$ with an angle of $72^\circ$ and length $1$ each.
	Every subsequent edge $e_i$ for $3 \le i \le n-1$ is drawn just as long as necessary so that the sequence $S_i$ is open.
	Obviously, no edge can be shortened and increasing any edge only increases the area of the spiral.
	
	\begin{figure}[htbp]
		\centering
			\includegraphics[scale=1]{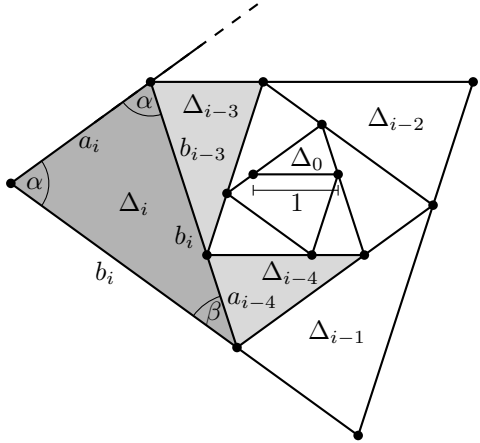}
		\caption{Construction of a minimum-area spiral based on the angle $\alpha=72^\circ$.}
		\label{fig:spiral-area}
	\end{figure}
	
	This procedure creates a sequence of isosceles triangles $\Delta_0, \dots, \Delta_{n-6}$ as indicated in Figure~\ref{fig:spiral-area}.
	Each $\Delta_i$ has two long sides of length $b_i$ and a short side of length $a_i$. 
	The angles opposite the two long sides are $\alpha = 72^\circ$ and the angle opposite the short side is $\beta=36^\circ$.
	By construction of the triangle sequence we obtain the recurrence $b_i = b_{i-3} + a_{i-4}$, which is similar to the definition of the Fibonacci numbers.
	From trigonometry we know that $a_i = \sin 36^\circ / \sin 72^\circ \cdot b_1 = 1/(2 \cos 36^\circ) \cdot b_i \approx 0.618 \cdot b_i$ and that the area of $\Delta_i$ is $A_i = 1/2 \cdot b_i^2 \sin 36^\circ$.
	Using $b_{i} = b_{i-3} + a_{i-4} \ge 2 a_{i-4}$ and $a_i \ge a_{i-4}/\cos 36^\circ$ we can now bound $A_i$ as follows:
	\begin{equation}
		\begin{array}{rcl}
			A_i & = & \frac{1}{2} b_i^2 \sin 36^\circ\\
			    & \ge & 2 \sin 36^\circ a_{i-4}^2\\
			    & \ge & 2 \sin 36^\circ \frac{1}{\cos 36^\circ}^{\lfloor i/4 \rfloor} a_0\\
			    & \ge & 1.236^{\lfloor i/4 \rfloor} a_0.
		\end{array}
	\end{equation}
	Clearly, the smallest disk containing the spiral has area at least $A_{n-6}$ and so by our definition of the area of a drawing the whole spiral has area $\Omega(c^n)$ for $c = \sqrt[4]{1.236} \approx 1.054$. \qed
\end{proof}

By combining Lemmas~\ref{lem:spiral} and~\ref{lem:spiral-area} we immediately obtain the following theorem since drawing the whole Fibonacci caterpillar $T_k$ requires at least as much area as drawing only its spine.

\begin{theorem}\label{thm:straight-expo}
	Any straight-line drawing of the Fibonacci caterpillar $T_k$ with perfect angular resolution requires area $\Omega(c^k)$ for some $c>1$.
\end{theorem}

Similar reasoning was used by Frati~\cite{f-mapuddtofdag-08} to show an exponential lower bound on the area of upward straight-line drawings for ordered trees.
The Fibonacci caterpillar shows that we cannot maintain all
constraints (straight-line edges, crossing-free, perfect angular
resolution, polynomial area) for ordered trees. However, as we show
next, using circular arcs instead of straight-line edges allows us
to respect the other three constraints; see Figure~\ref{fig:fibo-lombardi}.

\section{Lombardi drawings for ordered trees}
\label{sec:lombardi}

In this section, let $T$ be an ordered tree with $n$ nodes. As we have
seen in Section~\ref{sec:exponentialArea}, we cannot find polynomial
area drawings for all ordered trees using straight-line edges. 
However, by using circular arc edges instead of straight-line segments we can achieve all remaining constraints as in the unordered case.
That is, we can find crossing-free circular
arc drawings with perfect angular resolution and polynomial area. Recall that a drawing with circular arcs and perfect angular resolution is called a
Lombardi drawing~\cite{degkn-ldg-12}.

The flavor of the algorithm for Lombardi tree drawings is similar to
our straight-line tree drawing algorithm of
Section~\ref{sec:straight-line-tree}: We first compute a heavy-path
decomposition $H(T)$ for $T$, and then we recursively draw all heavy paths
within disks of polynomial area in a bottom-up fashion. 
More precisely, we ensure the following invariant for the drawing of any heavy path and all its descendants.
\begin{invariant}\label{inv:diskradius}
	A heavy path $P$ at level $j$ of $H(T)$ and all its descendants are drawn inside a disk $D$ of radius $2\cdot
	4^{h(T)-j} n(P)$, where $n(P) = |T_{v}|$ for the root $v$ of $P$.
\end{invariant}
Given the logarithmic height of the heavy path decomposition, this yields a drawing of $T$ with polynomial area.

In Section~\ref{sec:lom-heavy-path}, we describe how to draw a heavy path~$P$ (but not yet its light subtrees) under the assumption that each node of~$P$ is centered in a disk of given radius. 
Subsequently, Section~\ref{sec:light-children} shows how the light subtrees of a heavy-path node~$v$, which are themselves heavy paths of the level below and thus recursively drawn within disks of fixed size according to Invariant~\ref{inv:diskradius}, are placed within the space reserved around~$v$ in the previous step.
These two steps define the drawing of a heavy path $P$ and all its descendants, which we show satisfies Invariant~\ref{inv:diskradius}, and which is then used as a component for the drawing of the parent of~$P$ in~$H(T)$.

\subsection{Drawing heavy paths}
\label{sec:lom-heavy-path}

Let $P=(v_1, \ldots, v_k)$ be a heavy path at level $j$ of the
heavy-path decomposition. Since we will draw $P$ incrementally starting from the leaf and ending with the root of $P$, we assume that the last node $v_k$ is the root of $P$. We denote each edge $v_iv_{i+1}$ by $e_i$.  
Recall that the angle at an intersection point of two circular arcs is
measured as the angle between the tangents to the arcs at that
point. We define the angle $\alpha(v_i)$ for $2 \le i \le k-1$ to be
the angle between $e_{i-1}$ and $e_i$ at node $v_i$ (measured
counter-clockwise).  The angle $\alpha(v_k)$ is defined as the angle
at $v_k$ between $e_{k-1}$ and the light edge $e=v_ku$ connecting
the root $v_k$ of $P$ to its parent $u$. 
Due to the perfect angular resolution requirement for each node $v_i$,
the angle $\alpha(v_i)$ is obtained directly from the number of edges
between $e_{i-1}$ and $e_i$ and the degree $d(v_i)$.

\begin{lemma}\label{lem:heavypath:lombardi}
  Given a heavy path $P=(v_1, \ldots, v_k)$ and a disk $D_i$ of radius
  $r_i$ for the drawing of each $v_i$ and its light subtrees, we can
  draw $P$ with each $v_i$ in the center of its disk $D_i$ inside a
  large disk $D$ in $O(k)$ time 
such that the following properties hold:
  \begin{enumerate}
  \item\label{item:lom:edges} each heavy edge $e_i$ is a
    circular arc that does not intersect any disk other than
    $D_i$ and $D_{i+1}$;
  \item\label{item:lom:light} there is a stub edge incident to $v_k$
    that is reserved for the light edge connecting $v_k$ and its
    parent $u$;
  \item\label{item:lom:disjoint} any two disks $D_i$ and $D_j$ for
    $i\ne j$ are disjoint;
  \item\label{item:lom:angres} the angle between any two consecutive
    heavy edges $e_{i-1}$ and $e_i$ is $\alpha(v_i)$;
  \item\label{item:lom:area} the radius of $D$ is $r=2\sum_{i=1}^k r_i$.
  \end{enumerate}
\end{lemma}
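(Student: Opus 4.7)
The plan is to mirror the layout of Lemma~\ref{lem:heavypath:straight} by placing one endpoint of the path at the center of $D$ and nesting the remaining disks $D_i$ inside concentric annuli $A_i$ around that center, where $A_i$ has width $2r_i$. If each $v_i$ lies on the middle circle of $A_i$, this layout immediately yields the outer radius bound $r=2\sum_{i=1}^{k}r_i$ (property~\ref{item:lom:area}) and disk-disjointness (property~\ref{item:lom:disjoint}), so the substance of the argument lies in producing circular arcs that match the prescribed tangent directions at the endpoints.

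I would construct the arcs iteratively along the path. The key geometric fact I plan to exploit is that for any two distinct points $p$ and $q$ and any tangent direction $t$ at $p$, there is a unique circular arc (possibly degenerate to a line segment) from $p$ to $q$ tangent to $t$ at $p$, and its arriving tangent direction at $q$ is thereby determined. Inductively, once $v_i$ has been placed on the middle circle of $A_i$ with $D_i$ rotated so that the stub reserved for $e_i$ has a known direction $t_i$, I would pick an angular position for $v_{i+1}$ on the middle circle of $A_{i+1}$, construct the unique arc from $v_i$ to $v_{i+1}$ tangent to $t_i$, and then rotate $D_{i+1}$ so that its stub for $e_i$ aligns with the arriving tangent at $v_{i+1}$. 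Because the stubs inside $D_{i+1}$ are laid out at their prescribed relative angles by the subroutine that drew $D_{i+1}$, this rotation forces the angle at $v_{i+1}$ between consecutive heavy edges to be exactly $\alpha(v_{i+1})$ (property~\ref{item:lom:angres}), and it also fixes the outgoing tangent $t_{i+1}$ needed for the next iteration.

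The main obstacle is to show that the angular position of $v_{i+1}$ can always be chosen so that the resulting arc lies inside $A_i\cup A_{i+1}$; once this is established, property~\ref{item:lom:edges} follows automatically, since every other disk $D_j$ is separated from these two annuli. I would handle this by a continuity argument over the one-parameter family of arcs obtained as $v_{i+1}$ traverses the middle circle of $A_{i+1}$: near the position where the tangent ray from $v_i$ in direction $t_i$ meets this circle, the arc is close to a straight segment of length $\Theta(r_i+r_{i+1})$ and fits comfortably in the combined annular strip of radial thickness $2r_i+2r_{i+1}$, while a sagitta estimate shows that this good behavior persists over an open interval of angular positions, any of which yields a valid arc. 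Finally, property~\ref{item:lom:light} is inherited from the subroutine that drew $D_k$: one of the $d(v_k)$ evenly spaced stubs around $v_k$ is reserved for the light edge to $v_k$'s parent, and by choosing the initial rotation of the whole construction we ensure that this stub exits $D$ without passing through any of the other $D_i$.
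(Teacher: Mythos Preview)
Your concentric-annulus layout matches the paper's, and your remarks on properties~\ref{item:lom:disjoint} and~\ref{item:lom:area} are correct. The gap is in the inductive arc construction.

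Your induction hypothesis is only that $v_i$ sits on the middle circle $C_i$ of $A_i$ with a known outgoing tangent $t_i$, and your sagitta argument begins with ``near the position where the tangent ray from $v_i$ in direction $t_i$ meets this circle.'' But nothing in your setup guarantees that $t_i$ points into the outer annulus. Once you place $v_{i+1}$ near that ray intersection and rotate $D_{i+1}$ to match the arriving tangent, the new outgoing direction $t_{i+1}$ is \emph{forced}: it is the reversed arriving tangent turned through the prescribed angle $\alpha(v_{i+1})$, which the fixed embedding may set anywhere in $(0,2\pi)$. For a near-straight arc the arriving tangent at $v_{i+1}$ is close to radially outward, so when $\alpha(v_{i+1})$ is small (or close to $2\pi$), $t_{i+1}$ points back toward the centre of $D$ and the next instance of your continuity argument fails at its opening clause. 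You have one degree of freedom per step---the angular position of $v_{i+1}$---but you spend it entirely on confining the current arc and give no argument that it can simultaneously steer $t_{i+1}$ outward for the next step.

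The paper sidesteps this by maintaining a stronger invariant: each $D_i$ is rotated so that the tangent to $C_i$ at $v_i$ \emph{bisects} $\alpha(v_i)$. Then the outgoing tangent of $e_i$ automatically makes the controlled angle $\alpha(v_i)/2$ with $C_i$, and the paper's key claim is that the one-parameter family of arcs leaving $v_i$ with this tangent realises every intersection angle in $[0,\pi]$ with $C_{i+1}$ (Figure~\ref{fig:transition-angle}). Picking the arc whose intersection angle equals $\alpha(v_{i+1})/2$ simultaneously places $v_{i+1}$ on $C_{i+1}$, keeps $e_i$ in the annulus between $C_i$ and $C_{i+1}$, and re-establishes the bisecting invariant for the next step. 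That propagating invariant is the missing ingredient in your plan.
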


\begin{proof}
  We draw $P$ incrementally starting from the leaf $v_1$ by placing
  $D_1$ in the center $M$ of the disk $D$ of radius $r=2\sum_{i=1}^k
  r_i$. We may assume that $D_1$ is rotated such that the edge $e_1$
  is tangent to a horizontal line at $v_1$ and that it leaves $v_1$ to
  the right. All disks $D_2, \ldots, D_k$ will be placed with their
  centers $v_2, \ldots, v_k$ on concentric circles $C_2, \ldots, C_k$
  around $M$ as shown in Figure~\ref{fig:heavy-lombardi}. 
	The radius of $C_i$ is $r_1 + 2\sum_{j=2}^{i-1} r_j +
  r_i$ so that $D_{i-1}$ and $D_i$ are placed in disjoint annuli separated by the circle $\hat{C}_{i-1}$ of radius $r_1 + 2\sum_{j=2}^{i-1} r_j$. Hence by construction no two disks intersect
  (property~\ref{item:lom:disjoint}). Each disk $D_i$ will be rotated
  around its center such that the tangent to $C_i$ at $v_i$ is the
  bisector of the angle $\alpha(v_i)$.

	\begin{figure}[tb]
    \centering
    \includegraphics{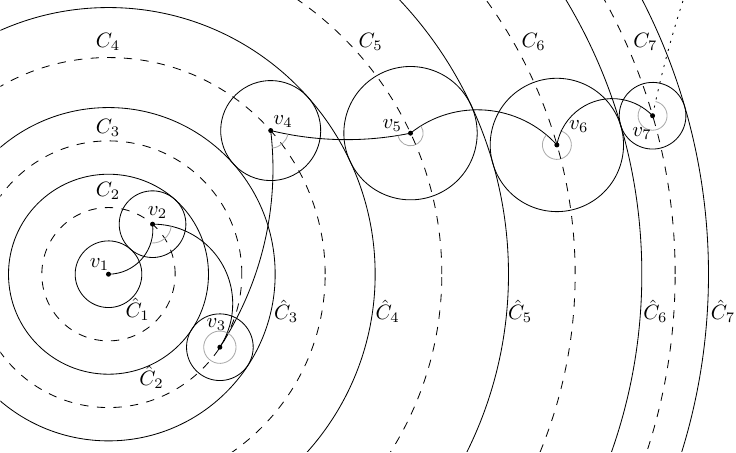}
    \caption{Drawing a heavy path $P$ on concentric circles with
      circular-arc edges. The angles $\alpha(v_i)$ are marked in gray;
      the edge stub to connect $v_7$ to its parent is dotted.}
    \label{fig:heavy-lombardi}
  \end{figure}

  We now describe one step in the iterative drawing procedure that draws
  edge $e_i$ and disk $D_{i+1}$ given a drawing of $D_1, \ldots,
  D_i$. Disk $D_i$ is placed such that $C_i$ bisects the angle
  $\alpha(v_i)$ and hence we immediately obtain the slope of the tangent to $e_i$ at $v_i$.
  This defines a family $\mathcal{F}_i$ of circular arcs emitted from $v_i$ with the same given tangent slope at $v_i$
  that intersect the circle $C_{i+1}$; see
  Figure~\ref{fig:transition-angle}. We consider all arcs from $v_i$
  until their first intersection point with $C_{i+1}$. Observe that
  the intersection angles of $\mathcal{F}_i$ and $C_{i+1}$ bijectively
  cover the full interval $[0, \pi]$, i.e., for any angle $\alpha \in
  [0, \pi]$ there is a unique arc in $\mathcal{F}_i$ that has
  intersection angle $\alpha$ with $C_{i+1}$. Hence we choose for
  $e_i$ the unique circular arc that realizes the angle
  $\alpha(v_{i+1})/2$ and place the center $v_{i+1}$ of $D_{i+1}$ at
  the endpoint of $e_i$. 
	Since the centers of all arcs $a$ in $\mathcal{F}_i$ lie on a line $\ell_i$, we parameterize them as $a=a(t)$ by a parameter $t \in \mathbb{R} \cup \{\infty\}$ that yields the corresponding circle center on $\ell_i$. 
	Then we consider the angle of the tangents to $a(t)$ and the circle $C_{i+1}$ in their first intersection point $p(t)$ and set it equal to $\alpha(v_{i+1})/2$. 
	Solving this equation for $t$ requires finding the roots of a polynomial of bounded degree, which we assume to be possible in constant time.
	We store the resulting arc center and radius together with $e_i$.
  We continue this process until the last disk
  $D_k$ is placed. This drawing of $P$ realizes the angle
  $\alpha(v_i)$ between any two heavy edges $e_{i-1}$ and $e_i$
  (property~\ref{item:lom:angres}). For the edge from $v_k$
  to its parent $u$ we can only reserve a stub whose tangent at $v_k$ has
  a fixed slope
  (property~\ref{item:lom:light}).
	The only information that we have about the edge $v_k u$ is that it belongs to the family $\mathcal{F}_k$ of arcs that intersect the circle $\hat{C}_k$ and have the given tangent at $v_k$. 
	This ambiguity does not cause problems in the subsequent steps though, and hence we can reserve all of the possible arcs simultaneously.
	Figure~\ref{fig:heavy-lombardi} shows an example.

  \begin{figure}[tbp]
  	\centering
  		\includegraphics[width=.6\textwidth]{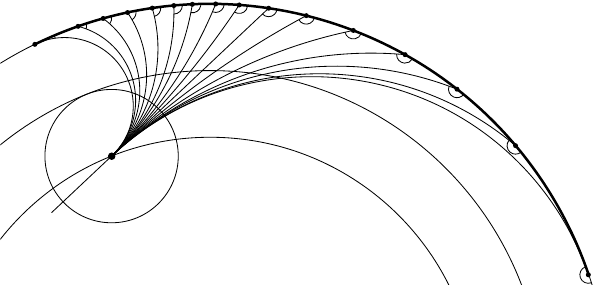}
  	\caption{Any angle $\alpha \in [0,\pi]$ can be realized.}
  	\label{fig:transition-angle}
  \end{figure}

  Each edge $e_i$ is contained in the annulus between $C_i$
  and $C_{i+1}$ and thus does not intersect any other edge of the
  heavy path or any disk other than $D_i$ and $D_{i+1}$
  (property~\ref{item:lom:edges}). Furthermore, the disk $D$ with
  radius $r=2\sum_{i=1}^k r_i$ indeed contains all the disks $D_1,
  \ldots, D_k$ (property~\ref{item:lom:area}). 

	It remains to show the time bound for computing the drawing of $P$.
	Similarly to drawing heavy paths in Section~\ref{sec:straightLine}, we store the position of each node $v_i$ in polar coordinates relative to its predecessor and relative to the center $M$ of $D$. This avoids the need to update the positions of all descendants in every step and allows to assign the final absolute coordinates in a top-down traversal of $T$.
	Given the position of node $v_{i}$ (with respect to $M$) we can compute the position of $v_{i+1}$ with respect to $M$ in constant time as described above. 
	Once all nodes of $P$ are placed, we additionally set the coordinates of each node $v_i$ with respect to its parent $v_{i+1}$. 
	The required time is $O(k)$.	
	\qed
\end{proof}

Lemma~\ref{lem:heavypath:lombardi} showed how to draw a heavy path $P$
with prescribed angles between the heavy edges and an edge stub to
connect it to its parent. Since the root $v$ of each heavy path $P$ (except the path
at the root of $H(T)$) is the light child of a node on the previous
level of $H(T)$, the light edge
from $v$ to its parent is actually drawn when placing the
light subtrees of a node, the topic of the next subsection.

\subsection{Drawing light subtrees}
\label{sec:light-children}

Once the heavy path $P$ is drawn as described above, it remains to
place the light subtrees of each node $v_i$ of $P$. For each node
$v_i$ the two heavy edges incident to it partition the disk $D_i$ into
two regions. We call the region that contains the larger conjugate
angle the \highlight{large zone} of $v_i$ and the region that contains the
smaller conjugate angle the \highlight{small zone}. 
If both angles are equal to $\pi$, then we can consider both regions as small zones.
For the root node $v_k$ of $P$ we only know one heavy edge to $v_{k-1}$, whereas the light edge to its parent $u$ is not yet fully determined. 
In this case we define the two zones of $v_k$ as the regions between the heavy edge and the leftmost/rightmost possible arc for the light edge $v_k u$.
Since the light subtrees of $v_k$ will be placed in the small and large zones of $v_k$, the node $v_k$ can always be connected to its parent $u$ by an arc that does not intersect any edge in $T_{v_k}$.
We say that $v_k$ is \emph{exposed} to its parent.
Our approach in this section proceeds in two steps. 
First, we find a disjoint placement of the child disks in the small and large zone. 
In the second step, we actually draw the light edges from $v_i$ to all its light children.

For a node $v_i$ at level $j$ of $H(T)$ we define the radius $r_i$ of
$D_i$ as $r_i = 4^{h(T)-j} (1+\sum_{u \in L(v_i)} |T_u|) = 4^{h(T)-j}
l(v_i)$. All light children of $v_i$ are at level $j+1$ of $H(T)$ and
thus by inductively assuming that Invariant~\ref{inv:diskradius} holds, every light child $u$ of $v_i$ and its subtree is drawn in a
disk of radius $r_u = 2\cdot 4^{h(T)-j-1} |T_u|$. Thus we know that
$r_u \le r_i/2$; in fact, we even have $\sum_{u \in L(v_i)} r_u \le
r_i/2$.

\paragraph{Light subtrees in the small zone.}

Depending on the angle $\alpha(v_i)$, the small zone of a disk $D_i$
might actually be too narrow to directly place the light subtrees in
it. 
Therefore, we define the \emph{extended small zone} as the area bounded by $e_{i-1}$, $e_i$, $\hat{C}_{i-1}$, $\hat{C}_{i}$, and the horizontal ray to $-\infty$ through $v_1$.
Fortunately, we can always place another disk $D'$ of radius at most $r_i/2$ in this extended small zone such that $D'$ touches $e_{i-1}$ and $e_i$
and does not intersect any other previously placed disk; see
Figure~\ref{fig:small-zone}. For a given radius of $D'$ the position of the center of $D'$ with respect to $v_i$ can be computed in constant time. If there is a single child $u$ in the
small zone then $D' = D_u$ and we are done. The next lemma shows how
to place more than one child. 
Let $l \ge 2$ be the number of light children of $v_i$ to be placed in the (extended) small zone.
We say that the disks $D'_1, \dots, D'_l$ are \emph{correctly placed} in the (extended) small zone if their interiors are mutually disjoint and if every point inside any disk $D'_i$ can be reached by a circular arc from $v_i$ with given slope at $v_i$ such that the arc does not intersect any other disk $D'_j$ for $j\ne i$.

\begin{figure}[bt]
  \centering
  \includegraphics[page=2]{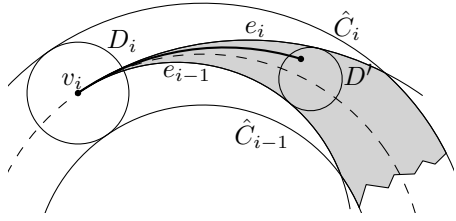}
  \caption{Placing a single disk $D'$ in the extended small zone of
    $D_i$ (shaded gray).}
  \label{fig:small-zone}
\end{figure}

\begin{lemma}\label{lem:one-size-fits-all}
  If a single disk $D'$ of radius $r'$ can be placed in the (possibly
  extended) small zone of the disk $D_i$, then we can correctly place
  any sequence of $l$ disks $D'_1, \ldots, D'_l$ with radii $r_{1}',
  \ldots, r_{l}'$ and $\sum_{i=1}^l r_{i}' = r'$ in the (extended)
  small zone of $D_i$. This can be done in $O(l)$ time.
\end{lemma}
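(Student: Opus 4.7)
The plan is to nest all $l$ disks inside $D'$ itself by arranging them in a row along a single diameter. Since $D'$ is already known to sit inside the (extended) small zone and to avoid every previously placed disk, any packing inside $D'$ automatically inherits those two properties, and the entire lemma reduces to the purely arithmetic fact that disks of radii summing to $r'$ can be laid out as a chain of externally tangent disks of total length $2r'$ along a diameter of a disk of radius $r'$.

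Concretely, I would choose the diameter $\ell$ of $D'$ perpendicular to the segment from $v_i$ to the center of $D'$, and orient $\ell$ so that traversing it from one endpoint to the other realizes the prescribed angular order of the light children around $v_i$. This is well defined: because $D'$ is contained in the small zone, the apex $v_i$ does not lie in $D'$, and so the angular order from $v_i$ of points along any chord of $D'$ is strictly monotone. I would then place $D'_1, D'_2, \ldots, D'_l$ along $\ell$ in that order, each tangent to the next, with the whole chain spanning $\ell$. An elementary calculation puts the center of $D'_j$ on $\ell$ at signed distance $c_j = \sum_{k<j} r'_k - \sum_{k>j} r'_k$ from the center of $D'$.

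Using $\sum_k r'_k = r'$, the two sums in $c_j$ are each at most $r' - r'_j$, so $|c_j| \le r' - r'_j$, i.e., $|c_j| + r'_j \le r'$, which is exactly the condition $D'_j \subseteq D'$. Hence all $l$ disks lie inside $D'$, are pairwise disjoint by construction, appear around $v_i$ in the prescribed order by the choice of orientation of $\ell$, and are disjoint from every other previously drawn disk because $D'$ already is. I expect no real obstacle: the only care needed is in the bookkeeping around the angular order, namely choosing the direction along $\ell$ consistently with the cyclic ordering convention that the algorithm uses for the children around $v_i$. The geometric content of the lemma is entirely captured by the one-line inequality $|c_j| + r'_j \le r'$.
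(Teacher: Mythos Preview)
Your packing argument is fine as far as it goes: the disks $D'_j$ lined up along a diameter of $D'$ are disjoint, lie inside $D'$, and appear around $v_i$ in the prescribed cyclic order. But that is not what ``correctly place'' means here. You are in the Lombardi section: each light child in the small zone must be joined to $v_i$ by a circular arc whose tangent direction at $v_i$ is \emph{prescribed} (one of the equally spaced spokes for perfect angular resolution), and those arcs must avoid each other and every other $D'_k$. The paper makes this explicit in the paragraph ``Drawing light edges'' immediately following the lemma: ``When placing the disk $D_u$ \ldots\ we made sure that a circular arc from $v_i$ with the tangent required for perfect angular resolution at $v_i$ can reach any point inside $D_u$ without intersecting any other edge or disk.'' Your proposal never touches this, and it is the actual geometric content of the lemma; the one-line inequality $|c_j|+r'_j\le r'$ is the easy part.

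The paper's proof is built around precisely this routing constraint. It peels off the smaller of $D'_1, D'_l$ and packs the rest into a single disk $D''$, both on a diameter of $D'$; but the rotation of that diameter is not fixed in advance. It is chosen so that a circular arc $a_1$ through $v_i$, carrying the tangent slope that lies between those for the peeled disk and for $D''$, is tangent to both. Then every arc from $v_i$ into the peeled disk stays inside the horn bounded by $a_1$ and one boundary arc of the zone, while everything destined for $D''$ lies on the other side of $a_1$; recursing on $D''$ yields crossing-freeness for all $l$ edges. Your fixed perpendicular diameter provides no such separating arcs. Two arcs from $v_i$ with distinct prescribed tangents lie on two distinct circles through $v_i$ and meet at exactly one further point; when the tangent directions fan out across a wide small zone while all the target disks cluster along a single chord of $D'$, nothing in your placement prevents that second intersection from landing on the drawn arcs, or prevents the arc headed to $D'_j$ from sweeping through $D'_{j\pm 1}$.
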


\begin{proof}
  The idea of the algorithm for placing the $l$ disks is to first
  place the disk $D'$ in the small zone as before. The disks $D'_1,
  \ldots, D'_l$ will then be placed within $D'$ so that no additional
  space is required.

  In the first step of the recursive placement algorithm we either
  place $D'_1$ or $D'_l$ (whichever has smaller radius) and a disk
  $D''$ containing the remaining sequence of disks $D'_2, \ldots,
  D'_l$ or $D'_1, \ldots, D'_{l-1}$, respectively. Without loss of
  generality, let $r_{1}' \le r_{l}'$ and thus in particular $r_1' \le
  r'/2$. In order to fit inside $D'$ the disks $D'_1$ and~$D''$ must
  be placed with their centers on a diameter of $D'$; see
  Figure~\ref{fig:many-disks}a. The degree of freedom that we have is
  the rotation of that diameter around the center of $D'$. Then the
  locus of the tangent point of $D'_1$ and $D''$ is a circle $\hat{C}$
  of radius $r' - 2r_1'$ around the center of $D'$; see
  Figure~\ref{fig:many-disks}b. For any given tangent slope at~$v_i$, in particular the slope required for the edge from $v_i$ to the light child in $D'_1$, there are exactly two circular arcs
  $a_1$ and $a_2$ that are tangent to $\hat{C}$. 
	They can be computed in constant time. 
	Let the two points of tangency on $\hat{C}$ be $p_1$ and $p_2$. 
	Now we rotate $D'_1$ and $D''$ such that their point of tangency coincides with either $p_1$ or $p_2$ depending on which of them yields the correct  embedding order of $D'_1$ and $D''$ around $v_i$. 
	Clearly, $a_1$ or $a_2$ are also tangent to $D'_1$ and $D''$ now. 
	Assume we choose $p_1$ and the corresponding arc $a_1$ as in Figure~\ref{fig:many-disks}b. 
  We claim that we can connect any point in $D'_1$ to $v_i$ with the unique circular arc of the required slope at node $v_i$ without creating any edge crossings. 
	(We will describe the exact placement of that arc later.)
  As in the proof of Lemma~\ref{lem:heavypath:lombardi}, there is a family $\cal F$ of circular arcs that pass through $v_i$ with the given slope.
	We consider the subset $\mathcal{F}' \subset \mathcal{F}$ that intersects disk $D'_1$ and thus can be used as basis for the edge from $v_i$ to the light child in $D'_1$. 
  Any such arc stays inside the horn-shaped region $\Upsilon$ that encloses $D'_1$ and is formed by a boundary arc $b$ of the small zone and $a_1$ before it reaches $D'_1$. 
  Assume to the contrary that there is an arc $a \in \mathcal{F}'$ that does not completely lie inside $\Upsilon$ before reaching $D'_1$. 
	The arc $a$ cannot intersect $a_1$ in a point other than $v_1$ since both $a$ and $a_1$ belong to~$\cal F'$.
	So $a$ must intersect the other boundary arc $b$ of $\Upsilon$. 
	However, since $a$ intersects $b$ in $v_i$ and lies inside $\Upsilon$ in some $\varepsilon$-neighborhood of $v_i$ it would have to intersect $b$ at least three times in order to reach a point of $D'_1 \subset \Upsilon$.
	This is a contradiction. 
	Since $a_1$ separates $D'_1$ from~$D''$, none of the arcs in $\cal F'$ nor $D'_1$ can interfere with any of the disks $D'_2, \ldots,
  D'_l$ and their respective edges as long as those disks stay inside $D''$
  or the edges connect to points in $D''$.
  \begin{figure}[tb]
    \centering
    \includegraphics[page=2]{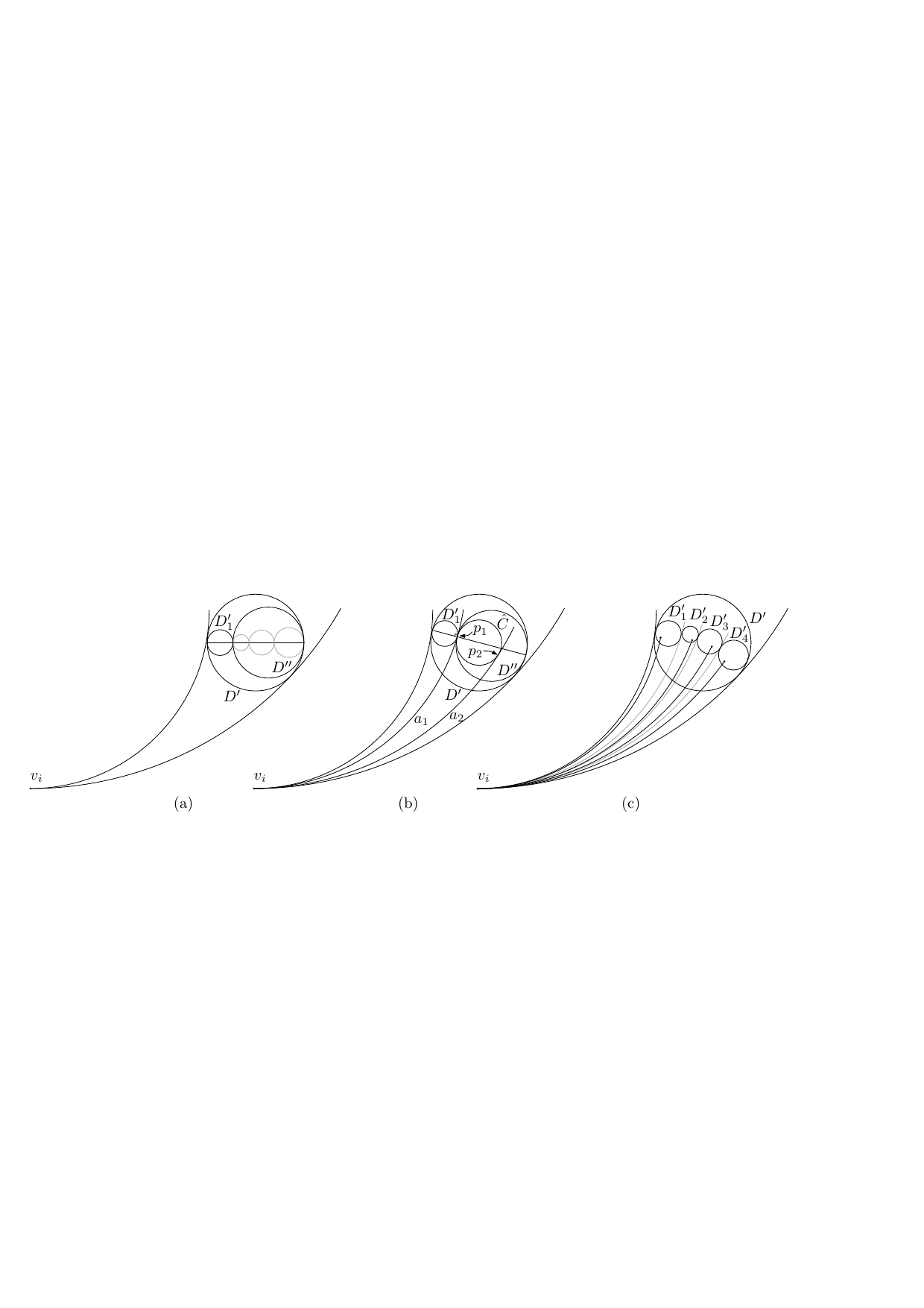}
    \caption{Placing disks $D'_1$ and $D''$ inside the disk $D'$.}
    \label{fig:many-disks}
  \end{figure}

  For placing $D'_2, \ldots, D'_l$ we recursively apply the same
  procedure again, now using~$D''$ as the disk $D'$ and $a_1$ as one
  of the boundary arcs. Then after $l$ steps, we have disjointly
  placed all disks $D'_1, \ldots, D'_l$ inside the disk $D'$ such that
  their order respects the given tree order and no two edges can possibly
  intersect. In other words they are correctly placed and each step can be performed in constant time.
	Figure~\ref{fig:many-disks}c gives an example. \qed
\end{proof}

We required that the edges $e_{i-1}$ and $e_i$ are tangent
to $D'$, which is possible only for an opening angle $\alpha$ of the
small zone of at most $\pi$. For any angle~$\alpha \le \pi$ the arcs
$a_1$ and $a_2$ always stay within the extended small zone and form
at most a semi-circle. This does not hold for $\alpha > \pi$.

\paragraph{Light subtrees in the large zone.}

Placing the light subtrees of a node $v_i$ in the large zone of
$D_i$ must be done slightly different from the algorithm for the small
zone since Lemma~\ref{lem:one-size-fits-all} holds only for opening
angles of at most $\pi$. On the other hand, the large zone does not
become too narrow and there is no need to extend it beyond~$D_i$. Our
approach splits the large zone incident to the heavy-path node $v_i$ into two parts that again have an opening angle of at most $\pi$ so that we can apply
Lemma~\ref{lem:one-size-fits-all} and draw all the children of $v_i$ accordingly.

Let $l$ be the number of light subtrees in the large zone of $D_i$. We
first place a disk $D'$ of radius at most $r_i/2$ that touches
$v_i$ and whose center lies on the line bisecting the opening
angle of the large zone. The disk $D'$ is large enough to contain the
disjoint disks $D'_1, \ldots, D'_l$ for the light subtrees of $v_i$
along its diameter. We need to distinguish whether $l$ is even or
odd. For even $l$ we create a container disk $D''_1$ for disks $D'_1,
\ldots, D'_{l/2}$ and a container disk $D''_2$ for $D'_{l/2 +1},
\ldots, D'_l$. Now $D''_1$ and~$D''_2$ can be tightly packed on the
diameter of $D'$. Using a similar argument as in
Lemma~\ref{lem:one-size-fits-all} we separate the two disks by a
circular arc through $v_i$ that is tangent to the bisector of
$\alpha(v_i)$ in $v_i$. Since $D'$ is centered on the bisector this is
possible even though the actual opening angle of the large zone is
larger than $\pi$.  If $l$ is odd, we create a container disk $D''_1$
for disks $D'_1, \ldots, D'_{\lfloor l/2 \rfloor}$ and a container
disk $D''_2$ for $D'_{\lceil l/2 \rceil +1}, \ldots, D'_l$. The median
disk $D'_{\lceil l/2 \rceil}$ is not included in any container. Then
we apply Lemma~\ref{lem:one-size-fits-all} to $D'$ and the three disks
$D''_1, D'_{\lceil l/2 \rceil}, D''_2$ along the diameter of $D'$; see
Figure~\ref{fig:children-large-zone}a. The separating circular arcs in
$v_i$ are again tangent to the bisector of $\alpha(v_i)$, which is,
since $l$ is odd, also the correct slope for the circular arc
connecting~$v_i$ to the median disk $D'_{\lceil l/2 \rceil}$.

In both cases we split the large zone and the sequence of light
subtrees to be placed into two parts that each have an opening angle
at $v_i$ of at most $\pi$ between a separating circular arc and the
edge $e_{i-1}$ or $e_i$, respectively. Next, we move $D''_1$ and~$D''_2$ along the separating circular arcs keeping their tangencies
until they also touch the edge $e_{i-1}$ or $e_i$, respectively. Then
we can apply Lemma~\ref{lem:one-size-fits-all} to both container disks
and thus place all light subtrees in the large zone; see
Figure~\ref{fig:children-large-zone}b. The splitting of the large zone involves finding tangent arcs to at most three disks and thus takes constant time. Combining this with the running time in Lemma~\ref{lem:one-size-fits-all} for the two small subinstances all $l$ disks in the large zone can be placed in $O(l)$ time.

\begin{figure}[tb]
  \centering
  \includegraphics{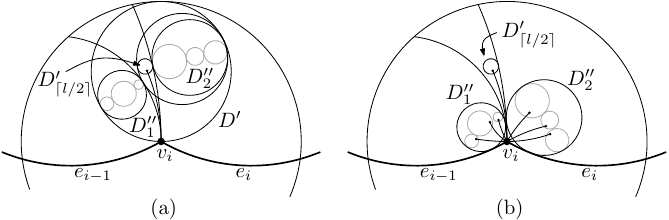}
  \caption{Placing light subtrees in the large zone by first splitting
  it into two parts (a) and then applying the algorithm for small
  zones to each part (b).}
  \label{fig:children-large-zone}
\end{figure}

\paragraph{Drawing light edges}

The final missing step is how to actually connect a heavy node $v_i$
to its light children given a position of $v_i$ and positions with respect to $v_i$ of all disks containing the light subtrees of $v_i$. Let $u$ be a light child of $v_i$
and let $D_u$ be the disk containing the drawing of $T_u$. When
placing the disk $D_u$ in the small or large zone of $v_i$ we made
sure that a circular arc from $v_i$ with the tangent required for
perfect angular resolution at $v_i$ can reach any point inside $D_u$
without intersecting another edge or disk. 

On the other hand, we know by Lemma~\ref{lem:heavypath:lombardi} that
$u$ is placed in the outermost annulus of $D_u$ and that it has reserved a stub for the edge $e=uv_i$. This stub represents all arcs in $u$ that share the tangent for $e$ required to obtain perfect angular resolution in $u$. 
Let $C_u$ be the circle that is the locus of $u$ if we rotate $D_u$ and the drawing of $T_u$ around the center of~$D_u$.

There is again a family $\mathcal{F}$ of circular arcs with the
required tangent at $v_i$ that all lead towards $D_u$ and intersect the
circle $C_u$. As observed in Lemma~\ref{lem:heavypath:lombardi} the
intersection angles formed between $\mathcal{F}$ and $C_u$ bijectively
cover the full interval $[0,\pi]$, i.e., for any angle $\alpha \in
[0,\pi]$ there is a unique circular arc in $\mathcal{F}$ that has an
intersection angle of~$\alpha$ with $C_u$. In order to correctly
attach $u$ to $v_i$ we first compute the arc $a$ in $\mathcal{F}$ that
realizes an intersection angle of $\alpha(u)/2$ with $C_u$, where
$\alpha(u)$ is the angle between $e$ and the heavy edge from $u$ to
its heavy child that is required for perfect angular resolution at
$u$. This arc $a$ can be computed in constant time similarly to computing a heavy-path edge in Lemma~\ref{lem:heavypath:lombardi}.
Let $p$ be the intersection point of $a$ with $C_u$.  Then
we rotate $D_u$ and the drawing of $T_u$ around the center of $D_u$
until $u$ is placed at $p$; see node~$v_7$ in
Figure~\ref{fig:heavy-lombardi}. 
This rotation is actually realized by setting the coordinates of $u$ with respect to its parent $v_i$ to those of $p$. 
We also store with $u$ the rotation angle between the new position of $D_u$ and its neutral position.
Since the stub of $u$ for $e$ also has an angle of $\alpha(u)/2$ with~$C_u$, the arc~$a$ indeed realizes the edge $e$ with the required angles for perfect angular resolution in both $u$ and $v_i$. 
Furthermore, $a$ does not enter the disk bounded by~$C_u$ and hence it does not intersect any part of the drawing of $T_u$ other than $u$. 

We can summarize our results for drawing the light subtrees of a node
as follows:

\begin{lemma}\label{lem:lom:lightchildren}
  Let $v$ be a node of $T$ at level $j$ of $H(T)$ with two incident
  heavy edges. For every light child $u \in L(v)$ assume there is a
  disk $D_u$ of radius $r_u = 2\cdot 4^{h(T)-j-1} |T_u|$ that contains
  a fixed drawing of $T_u$ with perfect angular resolution and such
  that $u$ is exposed to its parent $v$. Then we can
  construct in $O(d(v))$ time a drawing of $v$ and its light subtrees inside a disk $D$,
  potentially with an extended small zone, such that the following
  properties hold:
  \begin{enumerate}
  \item\label{item:edge} the edge between $v$ and any light child $u
    \in L(v)$ is a circular arc that does not intersect any
    disk other than $D_u$;
  \item\label{item:heavyedge} the heavy edges do not intersect any disk $D_u$;
  \item\label{item:disks} any two disks $D_u$ and $D_{u'}$ for $u \ne
    u'$ are disjoint;
  \item\label{item:angres} the angular resolution of $v$ is $2\pi/d(v)$;
  \item\label{item:area} the disk $D$ has radius $r_v = 4^{h(T)-j} l(v)$.
  \end{enumerate}
\end{lemma}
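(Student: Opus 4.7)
The plan is to assemble the constructions already developed in Section~\ref{sec:light-children} into a single statement. The ordered embedding at $v$ together with the two incident heavy edges partitions $L(v)$ into two contiguous sublists: those children that must lie in the small zone of $D_v$ and those that must lie in the large zone. The first thing I would verify is the area budget. By hypothesis each light child contributes $r_u = 2 \cdot 4^{h(T)-j-1}|T_u|$, so
\[
\sum_{u \in L(v)} r_u = 2 \cdot 4^{h(T)-j-1} \sum_{u \in L(v)} |T_u| \le \tfrac{1}{2}\cdot 4^{h(T)-j}\, l(v) = r_v/2,
\]
which already gives the target area bound (property~\ref{item:area}) and the crucial fact that the entire collection of child disks fits on a diameter of an auxiliary disk $D'$ of radius at most $r_v/2$.

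Next I would handle the small zone. Regardless of the angle $\alpha(v)$ between the two heavy edges, the small-zone construction developed just before Lemma~\ref{lem:one-size-fits-all} lets me place a disk $D'$ of radius at most $r_v/2$ that touches both heavy arcs and lies in the (possibly extended) small zone without intersecting any previously placed disk. I then invoke Lemma~\ref{lem:one-size-fits-all} with the sublist of small-zone children along a diameter of $D'$; this places those children disjointly, in the correct cyclic order, and draws each light edge as a circular arc with the correct tangent at $v$ that stays in the horn-shaped region bounded by the separating arcs produced by the recursion.

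For the large zone I would apply the splitting construction described for $l$ even and $l$ odd: place a disk $D'$ tangent to $v$ and centered on the bisector of $\alpha(v)$, split the children into two container disks $D''_1,D''_2$ (plus a possible median disk on the bisector when $l$ is odd), and introduce a separating circular arc through $v$ tangent to the bisector. Because $D'$ is centered on the bisector, this separation works even when the opening angle exceeds $\pi$. After sliding $D''_1$ and $D''_2$ along the separating arcs until they touch $e_{i-1}$ and $e_i$, each container sits in a subregion of opening angle at most $\pi$, so Lemma~\ref{lem:one-size-fits-all} applies to each and produces the required disjoint placement and circular-arc edges.

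Finally I would connect each light child $u$ to $v$ using the family-of-arcs argument at the end of Section~\ref{sec:light-children}: rotate $D_u$ about its center so that the exposed node $u$ lies at the unique intersection point on $C_u$ where the arc with the prescribed tangent at $v$ meets $C_u$ at angle $\alpha(u)/2$. This simultaneously realizes perfect angular resolution at $v$ (property~\ref{item:angres}) and at $u$, and keeps the arc outside the disk bounded by $C_u$ so that property~\ref{item:edge} holds. Properties~\ref{item:heavyedge} and~\ref{item:disks} follow because all child disks and their incident arcs are confined to the small or large zones strictly interior to $D$, away from the heavy edges. I expect the main subtlety to be bookkeeping the cyclic embedding across the small/large split so that the order produced by the two independent invocations of Lemma~\ref{lem:one-size-fits-all} agrees with the prescribed tree ordering at $v$; this is handled by choosing the orientation of $D'$ (and, in the small zone, the choice between the two tangency points $p_1,p_2$ in each recursive step) so as to match the given order. \qed
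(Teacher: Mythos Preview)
Your proposal is correct and is essentially the paper's own argument: the lemma is stated in the paper as a summary of the preceding constructions in Section~\ref{sec:light-children}, and you have faithfully assembled the small-zone placement (via Lemma~\ref{lem:one-size-fits-all}), the large-zone splitting, and the family-of-arcs connection step exactly as the paper intends. The only cosmetic point is that the extended small zone may protrude beyond $D$, so ``strictly interior to $D$'' is not quite the right phrase for property~\ref{item:heavyedge}; the correct reason is that the heavy arcs bound the zones and the child disks are placed tangent to or strictly inside those boundaries.
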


Now we have all ingredients for drawing the entire tree $T$ based on its heavy-path decomposition. We combine Lemmas~\ref{lem:heavypath:lombardi}
and~\ref{lem:lom:lightchildren} to recursively obtain a Lombardi drawing of $T$ in a bottom-up fashion. In the final step, we set the coordinates of the root of $T$ to $(0,0)$ and propagate the absolute node and edge positions downward using the relative positions and rotation angles stored during the recursive calls. We conclude with the following theorem:

\begin{theorem}\label{thm:lombardi}
  Given an ordered tree $T$ with $n$ nodes we can find in $O(n)$ time and space a
  crossing-free Lombardi drawing of $T$ that preserves the embedding
  of $T$ and fits inside a disk $D$ of radius $2\cdot 4^{h(T)} n$, where $h(T)$
  is the height of the heavy-path decomposition of $T$. Since $h(T) \le
  \log_2 n$ the radius of $D$ is no more than $2n^3$.
\end{theorem}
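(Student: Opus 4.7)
The plan is to combine Lemmas \ref{lem:heavypath:lombardi} and \ref{lem:lom:lightchildren} by induction on the height $h=h(T)$ of the heavy-path decomposition tree $H(T)$, drawing $T$ in a top-down recursion on $H(T)$. For each heavy path $P=(v_1,\ldots,v_k)$ at level $j$ I would first compute the prescribed radii $r_i = 4^{h-j}\,l(v_i)$ directly from the tree structure, and then maintain the following recursive invariant: $P$ together with all of its descendants is drawn inside a disk of radius $2\cdot 4^{h-j}\,n(P)$, with the root of $P$ exposed on the outermost annulus so that a tangent stub is available for the light edge connecting it to its parent on level $j-1$.

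For the recursive step, I would first apply Lemma \ref{lem:heavypath:lombardi} to place $v_1,\ldots,v_k$ on concentric circles inside a disk of radius $2\sum_i r_i$, reserving for each $v_i$ the disk $D_i$ of radius $r_i$ and fixing the two tangent directions of the incident heavy edges; this also determines the angles $\alpha(v_i)$ and thus the partition of $D_i$ into a small and large zone. I would then apply Lemma \ref{lem:lom:lightchildren} to each $v_i$ to draw the light edges and insert the light subtrees into $D_i$. The inductive hypothesis supplies, for each light child $u\in L(v_i)$, a drawing of $T_u$ in a disk of radius $2\cdot 4^{h-(j+1)}\,|T_u| = 2\cdot 4^{h-j-1}\,|T_u|$ with $u$ exposed on its outer annulus; these are exactly the disks required as input to Lemma \ref{lem:lom:lightchildren}, and the exposed-root invariant is exactly what lets the light-edge procedure of Section \ref{sec:light-children} rotate $T_u$ inside its disk so as to realize the correct tangent at $u$.

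The area bound follows by the last property of Lemma \ref{lem:heavypath:lombardi}, using $l(v_i) = 1 + \sum_{u\in L(v_i)} |T_u|$, so that
\[
r \;=\; 2\sum_{i=1}^k r_i \;=\; 2\cdot 4^{h-j}\sum_{i=1}^k l(v_i) \;=\; 2\cdot 4^{h-j}\,n(P),
\]
since the nodes of $P$ together with all nodes in the light subtrees of each $v_i$ account for exactly the descendants of $P$. Applying this to the root heavy path $\hat P$ at level $0$, where $n(\hat P)=n$, yields a radius of $2\cdot 4^{h(T)}\,n$, and since $h(T)\le \log_2 n$ this is at most $2n^3$.

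The main obstacle is not the area calculation but reconciling the top-down construction with the inductive hypothesis: I must verify that the rigid disk $D_u$ produced by the recursive call on the child heavy path starting at $u$ has (i) exactly the radius $2\cdot 4^{h-j-1}\,|T_u|$ demanded by Lemma \ref{lem:lom:lightchildren}, and (ii) its root $u$ situated on the outermost circle with a free tangent stub, so that the rotation step in the light-edge construction can realize the required tangent without disturbing the drawing inside $D_u$ or colliding with any previously placed disk. Both conditions are guaranteed by the invariants carried through the recursion and by properties \ref{item:lom:edges}--\ref{item:lom:area} of Lemma \ref{lem:heavypath:lombardi}, so the induction closes and the theorem follows. \qed
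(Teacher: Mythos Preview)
Your proposal is correct and follows the paper's own approach: the paper simply states that the theorem is obtained ``by combining Lemmas~\ref{lem:heavypath:lombardi} and~\ref{lem:lom:lightchildren}'', having already set up the invariant that a heavy path $P$ at level $j$ fits in a disk of radius $2\cdot 4^{h(T)-j}n(P)$, and your write-up is a faithful elaboration of exactly this combination, including the identity $\sum_i l(v_i)=n(P)$ and the exposed-root/stub condition needed to attach light edges. The only cosmetic point is that your phrase ``induction on the height $h=h(T)$'' is slightly imprecise since $h(T)$ is fixed; what you actually carry out (and what the paper intends) is a top-down recursion on the level $j$ in $H(T)$, with the base case at the leaves of $H(T)$.
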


\begin{corollary}\label{cor:area-lombardi}
	The drawing of $T$ according to Theorem~\ref{thm:lombardi} requires polynomial area.
\end{corollary}

\begin{proof}
	Since the shortest edges have again length at least 1, Theorem~\ref{thm:lombardi} implies that the area of the Lombardi drawing of $T$ is at most $4 \pi n^6$ according to our first area measure.
	Exactly the same arguments as used in Corollary~\ref{cor:area-straight} yield again that the polynomial area bounds continue to hold for the two alternative definitions of area based on the (squared) distance ratio of the farthest pair of nodes (or edges) to the closest pair of nodes (or non-adjacent edges), where in this case the farthest pair has distance at most $4n^3$ and the closest pair again at least distance~$1$.
	\qed
\end{proof}

Figure~\ref{fig:exmpl-lombardi} shows a drawing of the ordered tree in Figure~\ref{fig:heavyPath}
according to our method. Instead of asking for perfect
angular resolution, the same algorithm can also be used to construct a
circular-arc drawing of an ordered tree with an arbitrary given assignment of angles
between consecutive edges around each node that add up to $2\pi$. The
drawing remains crossing-free and fits inside a disk of radius~$O(n^3)$.

\section{Implementation Details}
\label{sec:implement}

Since tree drawings with perfect angular
resolution are also of practical importance, we have
implemented a basic version of our straight-line drawing algorithm.
The algorithm, whose area is polynomially bounded, from a practical viewpoint
is still far from desirable.  In particular, as
Figure~\ref{fig:implementSimple} illustrates, there is significant
space left between sibling nodes.
As Figure~\ref{fig:implementBetter}
demonstrates, with some simple heuristical refinements, far
better use of space can be achieved.

  \begin{figure}[h]
    \centering
    \subfloat[\label{fig:implementSimple}Layout drawn by the unmodified straight-line tree drawing algorithm.
		Although polynomially bounded, the area
		is so large that smaller features of the drawing are difficult to	see.]{\makebox[.48\textwidth]{\includegraphics[page=1,width=.4\textwidth]{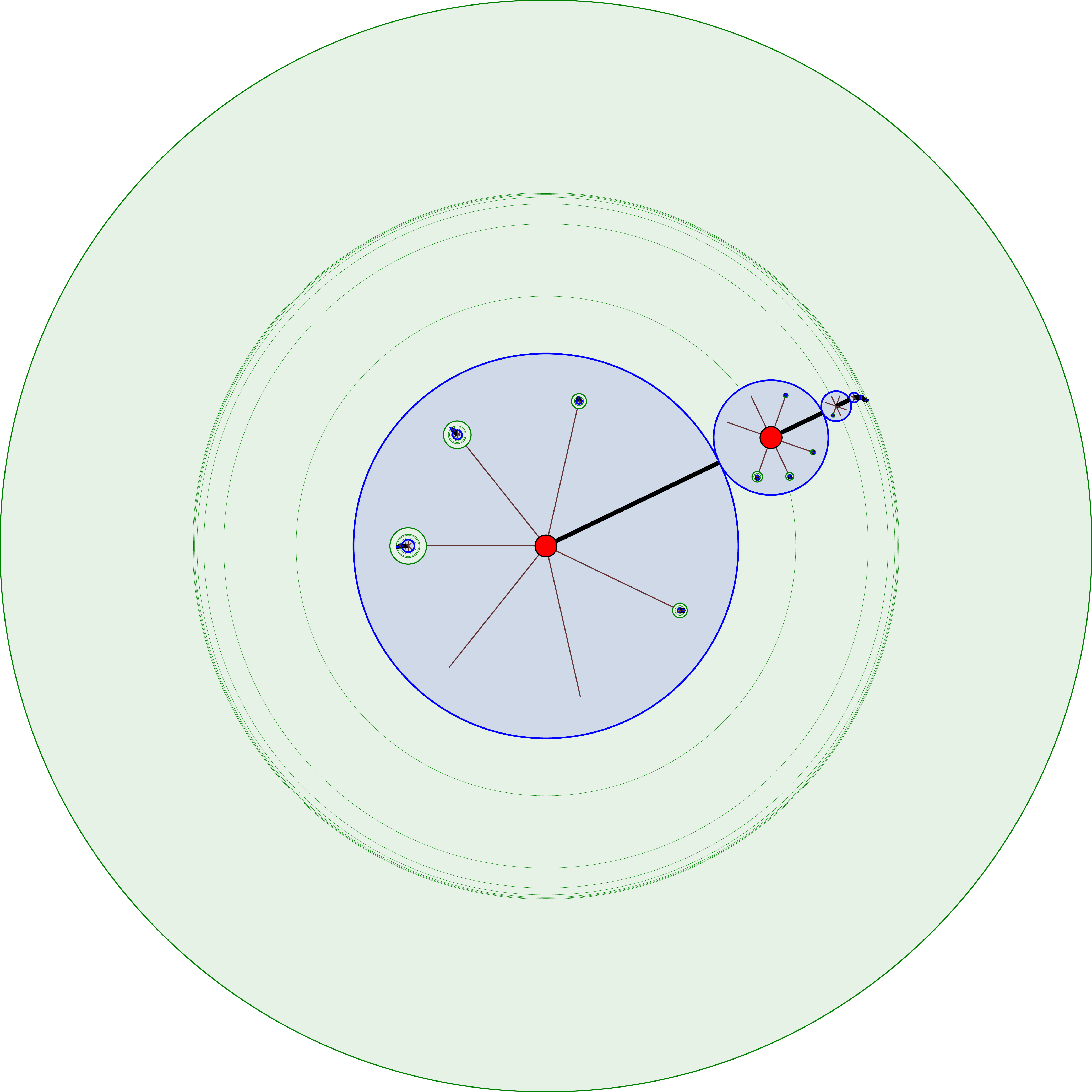}}}
    \hfill
    \subfloat[\label{fig:implementBetter}A space-optimized drawing that still maintains the
stated guarantees.]{\includegraphics[page=1,width=.4\textwidth]{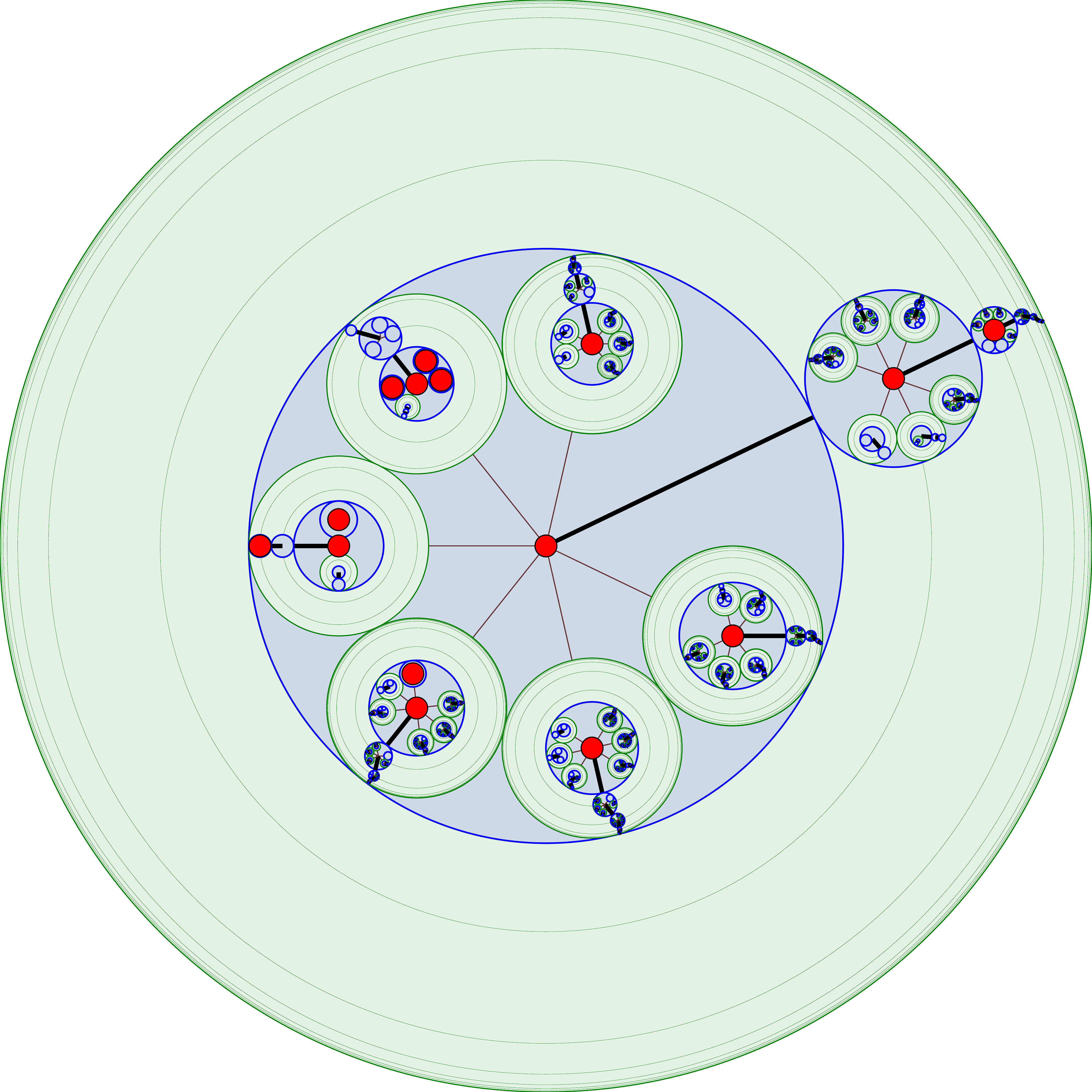}}
    \caption{A partial snapshot of a tree drawing.}
    \label{fig:implement}
  \end{figure}

We highlight a few straightforward space-saving improvements to the algorithm that still ensure the same area bound.
In the original construction, only large nodes are placed on the outer region with the smaller nodes placed inside the inner annulus.  
By continuing with a greedy approach of repeatedly inserting the next largest node in the outer region, skipping the spoke associated with the heavy edge, until no more nodes fit, and filling the remaining spokes with the smaller children, we can insert more nodes into the outer region.  
Moreover, the radii for many of the subtrees are far smaller than necessary.
After laying out the positions of each of the light subtrees, 
we increase their radii so their disk fits maximally within 
their wedge region, thus using considerably more of
the allocated space.
Noting that the heavy path also does not completely fill the disk 
associated with its head node, we also increase this radius as a 
constant factor after having laid out the main drawing.
Figures~\ref{fig:fillOuter} and~\ref{fig:exampleIllustrations}
provide further illustrations of these improvements.

  \begin{figure}[htb]
    \centering
    \subfloat[Partial layout drawn by the unmodified straight-line tree drawing algorithm that
places only large nodes in the outside annulus.]{\makebox[.48\textwidth]{\includegraphics[page=1,width=.4\textwidth]{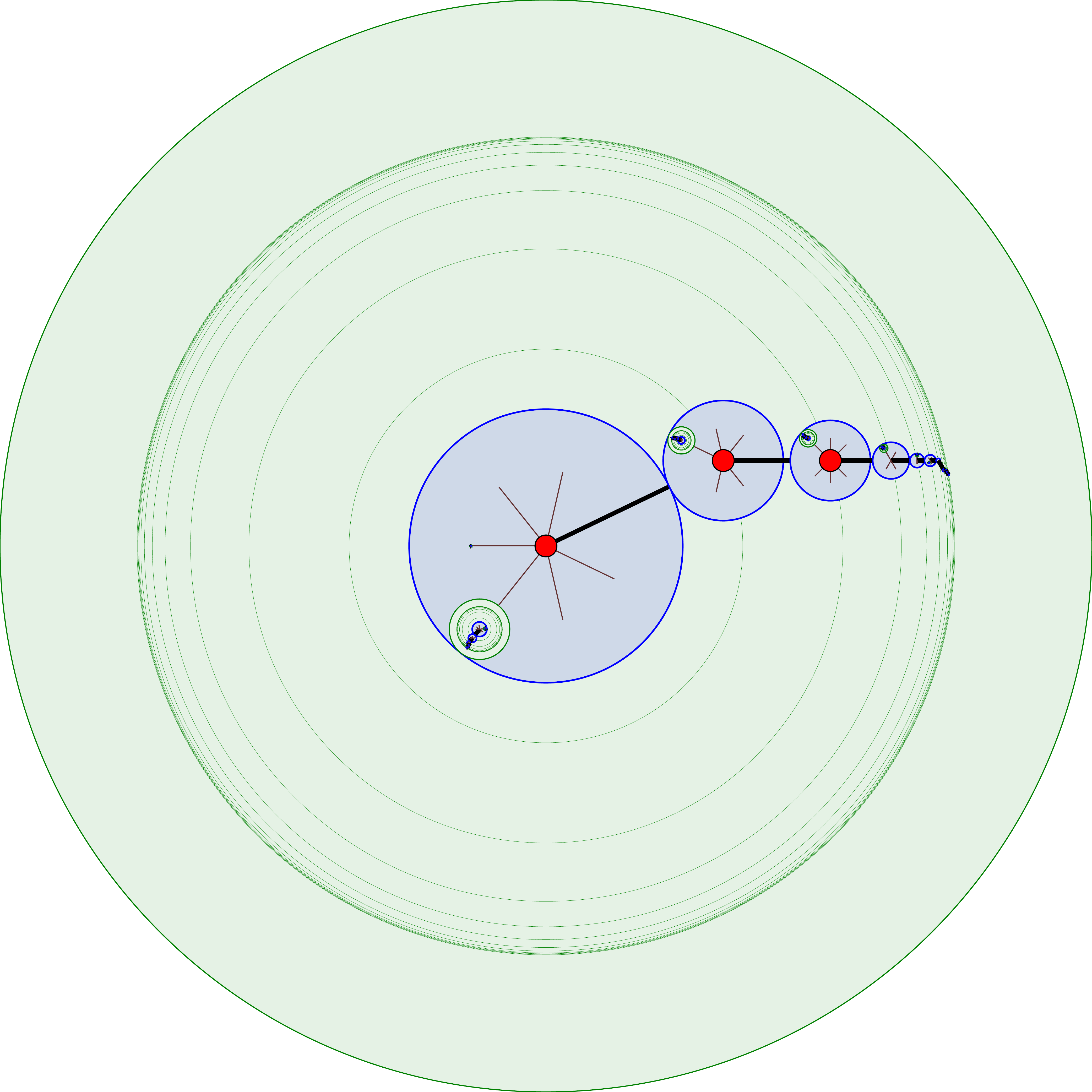}}}
    \hfill
    \subfloat[The same tree but with space-filling optimization in place.]{\includegraphics[page=1,width=.4\textwidth]{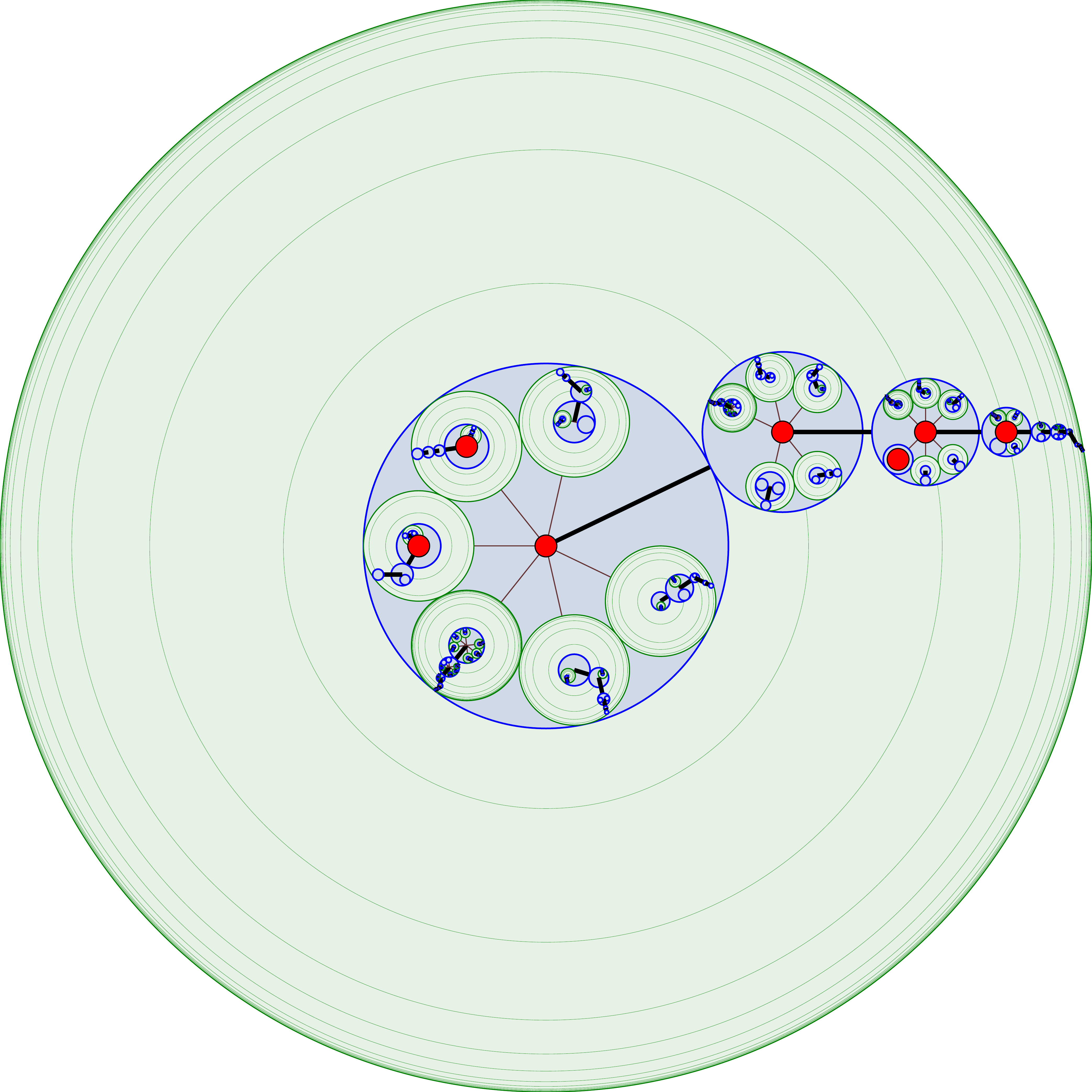}}
    \caption{A partial snapshot of a tree drawing demonstrating filling the disk associated
with the light subtree.}
    \label{fig:fillOuter}
  \end{figure}

  \begin{figure}[htb]
    \centering
    \subfloat[The Fibonacci caterpillar drawn as an unordered tree.]{\makebox[.48\textwidth]{\includegraphics[page=1,width=.4\textwidth]{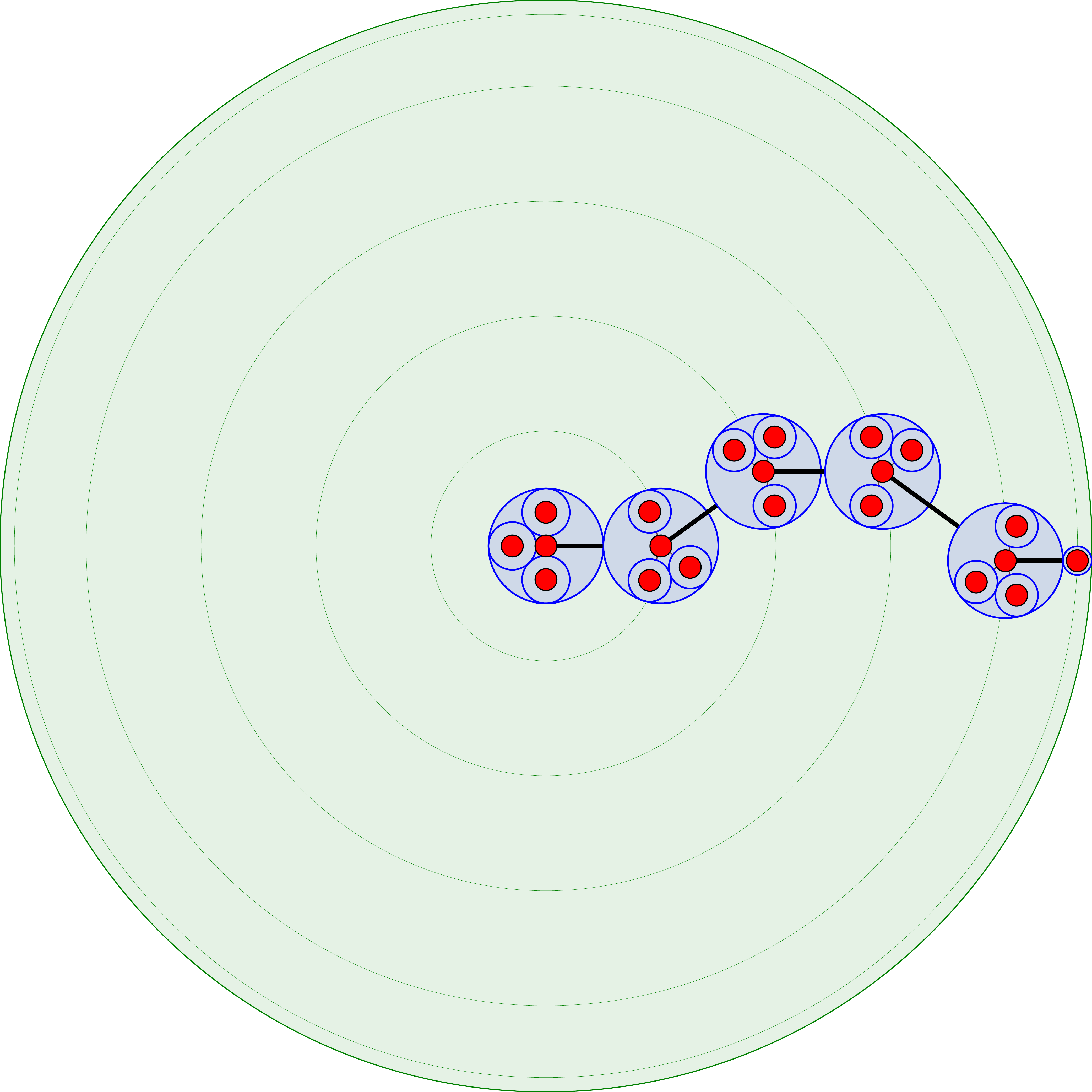}}}
    \hfill
    \subfloat[A 5-ary tree with different weight distributions per child.]{\includegraphics[page=1,width=.4\textwidth]{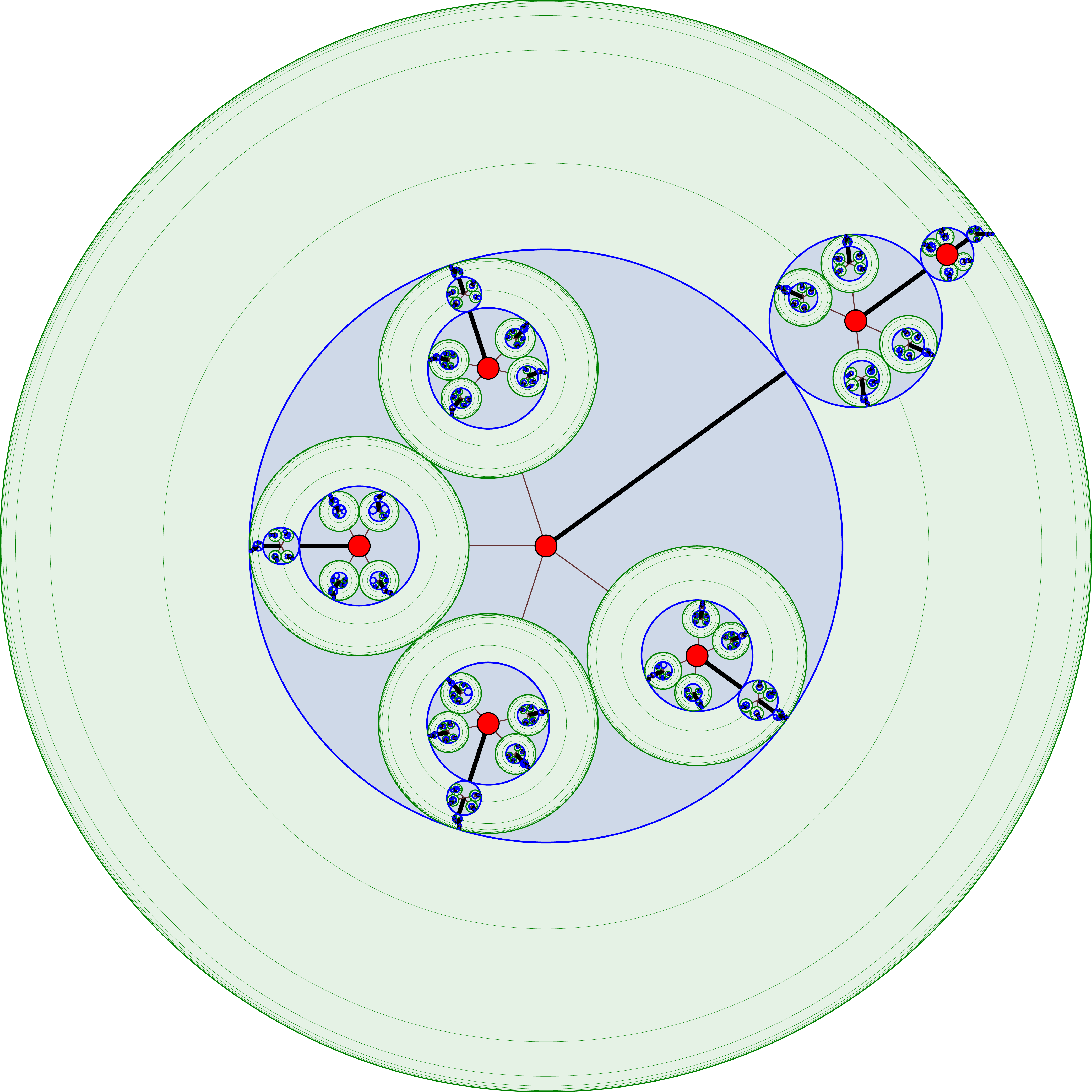}}
    \caption{Example illustrations.}
    \label{fig:exampleIllustrations}
  \end{figure}

\section{Conclusion and Closing Remarks}
\label{sec:conclusion}
We have shown that straight-line drawings of trees 
with perfect angular resolution and polynomial area can be efficiently computed, by carefully
ordering the children of each node and by using a style similar to
balloon drawings in which the children of any node are placed on two
concentric circles rather than on a single circle. However, using our
Fibonacci caterpillar example we also showed that this combination of
straight-line edges, perfect angular resolution, and polynomial area can
no longer be achieved if the order of the children of each node is fixed. Fortunately, for ordered trees with a fixed embedding, Lombardi drawings (in which edges are drawn as circular arcs) allow us to retain the other desirable qualities of absence of crossings, polynomial area, and perfect angular
resolution.

In addition to needing to implement the algorithm for an
ordered tree,
there remain further improvements to the basic implementation 
for the unordered tree discussed in Section~\ref{sec:implement}.
Since our intent was to highlight the key heavy path breakdown in our algorithm,
even when the heavy child could fit as one of the node's light children,
we opted to place the heavy child separately, requiring more
space than generally necessary.

Several problems in the study of Lombardi drawings of trees remain open.
For example, our polynomial area bounds are likely not tight. In fact, recently Halupczok and Schulz~\cite{hs-pbwpaoa-11} showed that any unordered $n$-node tree can be drawn within a disk of radius $n^{3.0367}$ using straight-line edges with perfect angular resolution. 
Moreover, our method
is impractically complex. It would be of interest to find simpler
Lombardi drawing algorithms that achieve perfect angular resolution
for more limited classes of trees, such as binary trees, with better area bounds.
Finally, there are many open problems in the area of plane Lombardi drawings of planar graphs.

\subsection*{Acknowledgments}

This research was supported in part by the National Science
Foundation under grants CCF-0545743, CCF-1115971 and CCF-0830403,
by the
Office of Naval Research under MURI grant N00014-08-1-1015, and by the
German Research Foundation under grant NO 899/1-1.

{\small
\raggedright
\setlength{\itemsep}{0pt} 
\bibliographystyle{abuser}
\bibliography{lombardi}
}

\end{document}